\newtheorem{lemma}{Lemma}
\newtheorem{theorem}{Theorem}
\newtheorem{remark}{Remark}
\newtheorem{definition}{Definition}
\newcommand{\bee}{\begin{eqnarray}}
\newcommand{\eee}{\end{eqnarray}}
\newcommand{\be}{\begin{eqnarray*}}
\newcommand{\ee}{\end{eqnarray*}}
\newcommand{\R}{{\mathbb R}}
\newcommand{\N}{{\mathbb N}}
\newcommand{\C}{{\mathbb C}}
\newcommand{\W}{{\mathcal W}}
\newcommand{\D}{{\mathcal D}}
\newcommand{\h}{h}
\begin{document}
 
 \title [Quantum forced oscillator via Wigner transform]{Quantum forced oscillator via Wigner transform}
 
 \author {
 Andrea Sacchetti
 }

\address {
Department of Physics, Informatics and Mathematics, University of Modena and Reggio Emilia, Modena, Italy.
}

\email {andrea.sacchetti@unimore.it}

\date {\today}

\thanks {This work is partially supported by GNFM-INdAM and by the UniMoRe-FIM project ``Modelli e metodi della Fisica Matematica''.}

\begin {abstract} 
In this paper we review the basic results concerning the Wigner transform and then we completely solve the quantum forced harmonic/inverted oscillator in such a framework; 
eventually, the tunnel effect for the forced inverted oscillator is discussed.
\end{abstract}

\maketitle

\section {Introduction} \label {Sez1}

The formulation of quantum mechanics in the phase space has been a topic of great interest since Wigner's groundbreaking paper \cite {W}. \ In such a treatment the explanation 
of the quantum-classical correspondence seems to be very spontaneous and many attempts have been done with the aim to shed light on quantum mechanics from a classical 
perspective \cite {Berry,Heller,MMF}. 

Let $\psi (x , t)$, $x \in \R^d$, be the solution to the time-dependent Schr\"odinger equation 
\bee
\left \{
\begin {array}{l}
i \hbar \frac {\partial \psi}{\partial t} = - \frac {\hbar^2}{2m} \Delta \psi + V \psi \\
\psi (x,0) =\psi_0 (x) 
\end {array}
\right. \, , \ \psi (\cdot , t ) \in L^2 (\R^d , dx)\, . \label {Eq1}
\eee
The Wigner (or Wigner-Slizard\footnote {In fact, the expression of such a transformation was introduced by Wigner; however, it was found by L. Slizard and E. Wigner for a different 
purpose some years before the publication of the paper \cite {W}.}) transform defined in the phase-space $(x,\xi ) \in \R^{2d}$ as follows 
\be
W^\hbar (x,\xi ,t) = \frac {1}{[2\pi]^d} \int_{\R^d} e^{-i \xi \cdot y } \psi \left ( x + \frac 12 \hbar y ,t \right ) \overline {\psi \left ( x - \frac 12 \hbar y ,t 
\right )} d y 
\ee
solves to the Wigner equation 
\bee
\frac {\partial W^\hbar}{\partial t} = - \frac {1}{m} \xi \cdot \nabla_x W^\hbar + \nabla_x V \cdot \nabla_\xi W^\hbar + {\mathcal R} (W^\hbar ) 
= \left \{ W^\hbar , h \right \} + {\mathcal R} (W^\hbar ) 
\label {Eq2}
\eee
where $h = \frac {1}{2m} \xi^2 + V$ is the classical Hamiltonian function, $\{ \cdot , \cdot \}$ denotes the Poisson's brackets and ${\mathcal R} (W^\hbar )$ is a 
remainder term depending on the unknown function $W^\hbar$ that formally goes to zero in the semiclassical limit $\hbar \to 0$. \ Thus, as Wigner himself pointed out, 
in the semiclassical limit the function $W^\hbar$ solves the classical Liouville equation $\frac {\partial W^\hbar}{\partial t} = \left \{ W^\hbar , h \right \}$.

Furthermore, remembering that in the standard notation of quantum mechanics the probability density in position space is given 
by $|\psi (x,t)|^2$ then another advantage of using the Wigner representation is related to the evaluation of expectation values for 
physical observables; in fact
\be
|\psi (x,t)|^2 = \int_{\R^d} W^\hbar (x,\xi ,t) d\xi 
\ee
and then the probability density can be obtained by means of the solution $W^\hbar$ to the Wigner equation (\ref {Eq2}). \ Similarly, the probability density in the momentum space follows, too.

Thus, at first glance, it seems that the program of studying the semiclassical limit of quantum mechanics by means of the classical Liouville equation is 
very promising; unfortunately, the residual term ${\mathcal R}$ usually contains derivatives of the unknown function $W^\hbar$ of higher than prime order, 
and so rigorous treatment of the complete equation (\ref {Eq2}) is a formidable task. \ Many attempts, both theoretical and numerical, have been done 
\cite {AE,CG,DM,FBFG,MMP,M,PRB,P,SS,VPSM} in order to find an approximate solution to (\ref {Eq2}) but some gaps in the theory are still open when the potential $V$ is 
not a quadratic function with respect to the spatial variable. \ Indeed, equation (\ref {Eq2}) is in general impossible to be analytically solved and only some 
approximated solutions $\tilde w$ may be obtained; in such a case a first problem occurs: {\it is $\tilde w$ the Wigner transform of a square integrable 
wave-function $\tilde \psi$?} and secondly: {\it if $\tilde w$ approximates, with respect to some norm, the exact solution $w$ to (\ref {Eq2}) corresponding to the 
wave-function $\psi $, then does $\tilde \psi$ approximate $\psi$ in the $L^2$ space?} \ As discussed in Remark \ref {Nota6}, as far as we known a fully satisfactory 
answer to these crucial questions is not yet given. 

However, when $V$ is a quadratic function with respect to the spatial variable then the remainder term $\mathcal R$ in (\ref {Eq2}) is exactly zero and thus the Wigner equation exactly reduces 
to the classical Liouville equation that admits an explicit solution. \ The case of quadratic potential 
basically reduces to the harmonic/inverted oscillator and in such a case equation (\ref {Eq1}) has been widely studied  obtaining the explicit expression of the 
propagator using, for instance, the Feynman path integrals \cite {FH,Hira,MY,Mo}, or by solving the classical Liouville equation \cite {BV,HSADV}, or by using the 
Heisenberg picture \cite {Barton}; see also \cite {Teta} for a review. \ A special attention has been paid in order to study the tunnel effect in the case of inverted oscillator \cite {BV,Barton,HSADV}.

In this paper we briefly review the Wigner transform and the dynamics of the wave-function in the Wigner representation. \ For sake of simplicity we restrict our 
attention to the one-dimensional case; the extension to higher dimension is quite simple (see, e.g., \cite {FM}). \ Then we focus our attention to the case of 
potentials of second degree with respect to the spatial variable of the form $V(x,t)= \gamma x^2 + Q(t) x$, where $\gamma $ is a constant factor (in principle, one 
could similarly treat the case of $\gamma (t)$ depending on $t$ but we don't dwell here on this problem). \ In such a case an explicit solution $W^\hbar (x,\xi ,t)$ to 
the classical Liouville equation is given. \ Then we apply our results to the particular case where the initial wave-function $\psi_0 (x)$ has a Gaussian shape; in 
such a case the tunnel effect for the inverted oscillator is studied in detail. \ We should remark that most of the results contained in this paper have been already 
obtained in previous works; our aim is to collect all these results in a self-consisting paper adding some new results concerning, for example, the continuity of the 
Wigner transform or the explicit solution in the Wigner representation for the forced harmonic/inverted oscillator.

The paper is organized as follows. In Section \ref {Sez2} we recall the definition of Wigner transform and we collect some basic properties in Subsection \ref {Sez2_1}; these 
results are not new and they have been already proved in the papers cited there; however, for the sake of self-consistness, a brief proof is also given. \ In 
Subsection \ref {Sez2_2} we discuss the problem of the invertibility of the Wigner transform; in fact, an important and, under some aspects, open problem consists 
in checking if a real-valued function defined in the phase-space is the Wigner transform of a pure/mixed state or not. \ In Subsection \ref {Sez2_3} we 
discuss a problem that usually is not treated: the continuity of the Wigner transform. \ In Subsection \ref {Sez2_4} we collect some important examples of 
Wigner transform of wave-functions $\psi (x,t)$.

In Section \ref {Sez3} we study the Schr\"odinger equation in the Wigner representation. \ As we have already discussed above the resulting equation is, in general, 
quite hard to be treated because the remainder term ${\mathcal R}(W^\hbar )$ consists of a formal sum of derivatives of the unknown function $W^\hbar$ of any 
order. \ In Subsection \ref {Sez3_2} we consider a couple of special Schr\"odinger equations: the case where the potential is given by a Dirac's delta and the case of 
the nonlinear Schr\"odinger equation; in both cases the resulting equation in the phase-space is an integro-differential equation.

In Section \ref {Sez4} we consider the case of the forced harmonic/inverted oscillator. \ Such a problem may be treated by solving the Schr\"odinger equation, as 
done in standard textbooks (see. e.g., \cite {Teta}). \ In fact, in the case of forced oscillator one could reduce it to the unforced one by means of a suitable 
change of variable as done by Husimi \cite {Husimi}, and then treats the latter model by means of the same arguments discussed in Remark \ref {Nota4} because in 
the case of harmonic potential equation (\ref {Eq21}) simply reduces to (\ref {Eq22}). \ However, as we can see in Subsection  \ref {Sez4_2}, the treatment in 
the phase-space is much more simply because the classical Liouville equation is reduced to a simple system of two ODEs. 

Finally, in Section \ref {Sez5} we apply the results obtained in the previous Section to the study of the dynamics of a Gaussian wave-function. \ The detailed study of the tunnel effect is given in Subsection \ref {Sez5_4}.

A couple of short appendices with technical calculations complete the paper.

\section {The Wigner transform} \label {Sez2}

\subsection {Definition and main properties} \label {Sez2_1}

Let $\varphi , \, \phi \in L^2 (\R ,dx)$, then we define a new function in phase space $(x,\xi )\in \R^2$ as follows
\be
\widetilde W^\hbar (x,\xi ):= \left [ \widetilde {\W^\hbar} (\varphi , \phi ) \right ] (x,\xi ) = \frac {1}{2\pi}
\int_{\R} e^{-i\xi y} \varphi \left (x + \frac {1}{2} \h y  \right ) 
\overline {\phi \left (x - \frac {1}{2} \h y \right )}  dy. 
\ee
The map $\widetilde {\W^\hbar} (\varphi , \phi )$ is bilinear since
\be
\widetilde {\W^\hbar} (\lambda_1 \varphi_1 + \lambda_2 \varphi_2 , \phi ) = \lambda_1 \widetilde {\W^\hbar} (\varphi_1 , \phi )+ \lambda_2 \widetilde {\W^\hbar} ( \varphi_2 , \phi )
\ee
and 
\be
\widetilde {\W^\hbar} (\varphi , \mu_1 \phi_1 + \mu_2 \phi_2 ) = \overline {\mu_1} \widetilde {\W^\hbar} (\varphi , \phi_1 )+ \overline {\mu_2} \widetilde {\W^\hbar} ( \varphi , \phi_2 )
\ee
for any $\varphi, \, \varphi_1 ,\, \varphi_2 ,\, \phi ,\, \phi_1 ,\, \phi_2 \in L^2 (\R , dx)$ and any $\lambda_1 , \, \lambda_2 ,\, \mu_1 ,\, \mu_2 \in \C$.
Furthermore,
\bee
 \widetilde {\W^\hbar} (\varphi , \phi ) = \overline { \widetilde {\W^\hbar} (\phi , \varphi ) } \, . \label {Eq3}
 \eee
 
\begin {definition}\label {Def_1}
Let $\varphi \in L^2 (\R ,dx)$, we define the (semiclassical) Wigner transform of the {\bf pure state} $\varphi$ the function
\bee
W^\hbar (x,\xi ):= \left [  {\W^\hbar} \varphi \right ] (x,\xi ) = \frac {1}{2\pi}
\int_{\R} e^{-i\xi y} \varphi \left (x + \frac {1}{2} \h y  \right ) 
\overline {\varphi \left (x - \frac {1}{2} \h y \right )}  dy. \label {Eq4}
\eee
\end {definition}

We restrict our attention 
to the one-dimensional model where $d=1$; in the case of higher dimension $d>1$ then the numerical pre-factor $(2\pi)^{-1}$ must be replaced by $(2\pi)^{-d}$, 
$x$, $\xi$ and $y$ belongs to $\R^d$ and the product $\xi y$ must be replaced by the scalar product $\xi \cdot y$. 

Since $\W^\hbar \varphi = \widetilde {\W^\hbar} (\varphi , \varphi)$ then it is quite obvious to observe that the Wigner transform is {\bf not} linear; i.e. 
\be
\W^\h (\lambda \varphi)= \lambda^2 \W^\h (\varphi ) \ \mbox { and } \ \W^\h (\varphi_1 + \varphi_2 ) \not= \W^\h (\varphi_1) + \W^\h (\varphi_2 ) 
\ee
and the map $\W^\h$ {\bf is not a linear operator}. \ Thus, if we denote $\D = \W^\hbar \left [ L^2 (\R , dx) \right ]$ the image of $L^2 (\R , dx)$ via the Wigner 
transform it follows that the {\bf set} $\D$ {\bf is not a linear space} (see Remark \ref {Nota3}).

If $\hat \rho$ is a density matrix that represents a {\bf mixed quantum state} then the Wigner representation is given by
\bee
\frac {1}{2\pi} \int_{\R} e^{-i y \xi} \left \langle x + \frac 12 \hbar y \left | \hat \rho \right |  x - \frac 12 \hbar y \right \rangle dy \, . \label {Eq5}
\eee
In the case of pure quantum state then $\hat \rho = |\varphi \rangle \langle \varphi |$ and thus (\ref {Eq5}) takes the form (\ref {Eq4}). \ This paper mostly 
concerns the Wigner transform of pure states.

Here, we recall some basic facts about the Wigner transform:

\begin {itemize}

\item [i.] The Wigner transform is not an injective map because $\W^\h (\varphi ) = \W^\h \left ( e^{i\theta} \varphi \right ) $ for any real-valued parameter $\theta$ 
independent of $x$. \ In fact, in Theorem \ref {Teo_1} we'll prove that if $\W^\hbar (\varphi )= \W^\hbar (\phi )$ for some 
$\varphi ,\, \phi \in L^2 (\R , dx)$ then $\varphi = \lambda \phi$ for some terms $\lambda $ independent of $x$ and such that $|\lambda |=1$.

\item [ii.] $W^\h$ is a {\bf real-valued function}, indeed it directly comes from (\ref {Eq3}) and since $\W^\hbar (\varphi )= \widetilde 
{\W^\hbar} (\varphi , \varphi )$. \ We must remark that $W^\h (x,\xi)$ may be always a non-negative function (see, e.g., Example \ref {Es2}) or 
it may also take negative values (see, e.g., Example (\ref {Es1})). \ In fact, Hudson \cite  {H} proved that the Wigner transform $W^\hbar (x,\xi)$ of a 
wave-function $\varphi (x)\in L^2$ is always non-negative if, and only if, $\varphi (x)= e^{-\frac 12 (ax^2 + bx+c)}$, $\Re a>0$, is a Gaussian-type 
function (we should also mention that Janssen \cite {Janssen} extended such an analysis to non-$L^2$ functions). 

\item [iii.] If $\varphi (x)$ is an even/odd function, i.e. $\varphi (- x) = \pm \varphi (x)$ then $W^\h (-x,\xi ) = W^\h (x,-\xi )$.

\item [iv.] The dependence on $\hbar$ is as follows $
W^\h (x,\xi  )= \frac {1}{\h} W^1 \left ( x, \frac {\xi}{\h} \right )$.

\item [v.] For any $\alpha ,\, \beta >0$ it follows that
\be
\left [ {\W}^\h (\varphi ) \right ] (\alpha x,\xi ) = \alpha \left [ {\W}^\h \left ( T_\alpha \varphi \right ) \right ] (x,\alpha \xi )
\ee
and
\be
\left [ {\W}^\h (\varphi ) \right ] ( x,\beta \xi ) = \frac {1}{\beta} \left [ {\W}^\h \left ( T_{\beta^{-1}} \varphi \right ) \right ] (\beta x, \xi )
\ee
where $\left [ T_\alpha \varphi \right ] (x) = \varphi (\alpha x)$.

\item [vi.] If a function $u (x,\xi ; \h)$ is the Wigner transform of a function $\varphi $ depending on $\h$ of the form 
\be
\varphi (x; \h ) = \left [ \beta (\h ) T_{\alpha (\h )} \phi \right ] (x) = \beta (\h )\phi \left [ \alpha (\h ) x \right ]\, , 
\ee
for some function $\phi \in L^2 (\R , dx)$ independent of $\h$ and some functions $\alpha (\h )$ and $\beta (\h )$ such that $\alpha (\h ) \not= 0$ and $\beta (\h ) 
\not= 0$ for any $\h$, then $u (x,\xi ; \h )$ may be written as 
\be
u (x,\xi ; \h )&=& \left [ \W^\h \left ( \beta (\h ) T_{\alpha (\h )} \phi \right ) \right ] (x, \xi ) = \frac {\beta^2 (\h )}{\alpha (\h )} \left [ \W^\h (\phi ) \right ] 
\left ( \alpha (\h ) x, \frac {\xi } {\alpha (\h )} \right ) \\
&=& \frac {\beta^2 (\h )}{\h \alpha (\h )} w\left ( \alpha (\h )x, \frac {\xi}{\h \alpha (\h )} \right )
\ee
for some function $w (x,\xi )$ independent of $\h$.

\end {itemize}

Here, we collect some properties of the Wigner transform \cite {Barletti, Case, Folland, HOSW, Lee, Schleich}.

\begin {lemma} \label {Lem_1} Let $\varphi \in L^2 (\R , dx )$ and let $W^\h = {\W}^\h (\varphi )$, then 
\be
\int_{\R} W^\h (x,\xi ) d\xi = |\varphi (x) |^2 \ \mbox { and } \ \int_{\R} W^\h (x,\xi ) dx = 
\frac {2\pi}{ \h} \left |\hat \varphi \left ( \frac {\xi}{\h} \right ) \right |^2
\ee
where $\hat \varphi$ is the Fourier transform of $\varphi$.
\end {lemma}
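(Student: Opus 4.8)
The plan is to substitute the defining integral \eqref{Eq4} into the two expressions and carry out the remaining integration explicitly; the only genuine work is to justify the interchange of integrals, which I would do first on a dense subclass and then pass to the limit.

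\emph{First identity.} Fix $x$. By the Cauchy--Schwarz inequality in the variable $y$, the function $g_x(y):=\varphi\bigl(x+\tfrac12\h y\bigr)\overline{\varphi\bigl(x-\tfrac12\h y\bigr)}$ satisfies $\int_{\R}|g_x(y)|\,dy\le\tfrac{2}{\h}\|\varphi\|_{2}^{2}<\infty$, so $g_x\in L^1(\R,dy)$ and $W^\h(x,\cdot)$ is, up to the factor $1/2\pi$, its Fourier transform. Formally, integrating \eqref{Eq4} over $\xi$ and using $\frac{1}{2\pi}\int_{\R}e^{-i\xi y}\,d\xi=\delta(y)$ gives $g_x(0)=|\varphi(x)|^2$. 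To make this rigorous without distribution theory I would insert a Gaussian regulator $e^{-\varepsilon\xi^2}$, apply Fubini (now legitimate), perform the elementary Gaussian $\xi$-integral, recognise the result as the convolution of $g_x$ with an approximate identity, and let $\varepsilon\to0$, obtaining the claim at every Lebesgue point of $g_x$, i.e.\ for a.e.\ $x$. Equivalently, one proves the identity for $\varphi$ in the Schwartz class $\mathcal S(\R)$, where Fubini and Fourier inversion apply directly, and extends it by density.

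\emph{Second identity.} Here I would integrate \eqref{Eq4} over $x$ and change variables $u=x+\tfrac12\h y$, $v=x-\tfrac12\h y$, whose Jacobian is $\tfrac1{\h}$; the resulting double integral factorizes, so that
\[
\int_{\R}W^\h(x,\xi)\,dx=\frac{1}{2\pi\h}\left|\int_{\R}e^{-i\xi u/\h}\varphi(u)\,du\right|^{2}=\frac{2\pi}{\h}\left|\hat\varphi\!\left(\frac{\xi}{\h}\right)\right|^{2},
\]
where the last equality uses the normalization $\hat\varphi(k)=\frac{1}{2\pi}\int_{\R}e^{-iku}\varphi(u)\,du$ consistent with the prefactor in \eqref{Eq4}. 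As before the interchange of integrals is immediate for $\varphi$ in a dense subclass (say $\mathcal S(\R)$ or $C_c^\infty(\R)$), and the general case follows by density: if $\varphi_n\to\varphi$ in $L^2$ then $|\varphi_n|^2\to|\varphi|^2$ in $L^1$, $\hat\varphi_n\to\hat\varphi$ in $L^2$, and $\W^\h(\varphi_n)\to\W^\h(\varphi)$ in $L^2(\R^2)$, so both sides converge (after pairing against a test function in the free variable, or in $L^1_{\mathrm{loc}}$).

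\emph{Main obstacle.} The delicate point is that for a general $\varphi\in L^2$ the Wigner transform need not be integrable on $\R^2$, so the integrals in the statement are not a priori absolutely convergent double integrals; the substance of a careful proof is to fix the sense in which the equalities hold (a.e.\ $x$, resp.\ a.e.\ $\xi$, or distributionally against test functions) and to run the approximation argument that reduces everything to the Schwartz class. Once that is in place, what remains is only the routine Gaussian and change-of-variables bookkeeping sketched above.
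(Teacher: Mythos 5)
Your proof follows essentially the same route as the paper: the first identity is Fourier inversion of $\nu_\h (y;x)=\varphi \left ( x+\frac 12 \h y \right ) \overline {\varphi \left ( x-\frac 12 \h y \right )}$ evaluated at $y=0$, and the second is the same change of variables $u=x+\frac 12 \h y$, $v=x-\frac 12 \h y$ with Jacobian $\frac 1\h$ and the factorization into $\frac {2\pi}{\h} \left | \hat \varphi \left ( \frac {\xi}{\h} \right ) \right |^2$ under the convention $\hat \varphi (k)=\frac {1}{2\pi}\int_{\R} e^{-iku}\varphi (u)\, du$. The only difference is that you make explicit the regularization/density argument (Gaussian regulator, Schwartz-class reduction, a.e.\ interpretation) that the paper handles formally through the $\delta$-identity, which is a sound but inessential refinement of the same computation.
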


\begin {proof}
We remark that 
\be
W^\h (x,\xi ) = \hat \nu_\h (\xi ; x) := \frac {1}{2\pi} \int_{\R} e^{-iy\xi } \nu_\h (y;x) dy 
\ee
where $\hat \nu_\h$ is the Fourier transform of
\bee
\nu_\h (y;x) := \varphi \left ( x + \frac 12 \h y  \right ) \overline { \varphi \left ( x - \frac 12 \h y \right )} \, . \label {Eq6}
\eee
Then it follows that
\be
\int_{\R} W^\h (x,\xi ) d\xi = \int_{\R} \hat \nu_\h (\xi ; x ) e^{i \xi \cdot 0} d \xi = \nu_\h (0;x) = |\varphi (x)|^2 \, . 
\ee
In order to prove the second statement let 
\be
\int_{\R} W^\h (x,\xi ) dx &=& \frac {1}{2\pi} \iint_{\R^2} e^{-iy\xi } \varphi \left ( x + \frac 12 \h y \right ) 
\overline {\varphi \left ( x - \frac 12 \h y \right )} dx dy \\ 
&=& \frac {1 }{2\pi \h} \iint_{\R^2} e^{-i(u-v)\xi/\h } \varphi ( u ) \overline {\varphi ( v )} du dv \\ 
&=& \frac {2\pi}{ \h} \hat \varphi \left ( \frac {\xi}{\h} \right ) \overline {\hat \varphi \left ( \frac {\xi}{\h} \right )}
\ee
where we set $u=x+ \frac 12 \h y$ and $v=x-\frac 12 \h y$. \end {proof}

\begin {remark} \label {Nota1}
From Lemma \ref {Lem_1} it follows that the expectation value $<a>$ of a classical observable $a=a(x,\xi)$ is given by
\be
<a> = \langle \psi , A \psi \rangle = \iint_{\R \times \R} a(x,\xi ) W^\hbar (x,\xi ) dx \, d\xi 
\ee
where $A$ is the linear operator associated to the classical observable and where we adopt the notation:
 \be
 \langle f, g \rangle := \langle f , g \rangle_{L^2} = \int_{\R} \overline {f(x)} g(x) dx\, , \ f,g \in L^2 (\R ,dx) \, . 
 \ee
 
\end {remark}

\begin {lemma} \label {Lem_2}
Let $\varphi \in L^2 (\R , dx )$ and let $W^\h = {\W}^\h (\varphi )$, then 
\be
\iint_{\R \times \R} W^\h (x,\xi ) d\xi dx = \| \varphi (\cdot )\|_{L^2}^2 \, .
\ee
\end {lemma}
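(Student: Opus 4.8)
The plan is to derive this directly from Lemma \ref{Lem_1}, since the double integral factors through either of the two marginal identities already established there. The quickest route: take the first identity of Lemma \ref{Lem_1}, namely $\int_{\R} W^\h (x,\xi ) d\xi = |\varphi (x)|^2$, which holds for (almost) every $x$, and integrate both sides with respect to $x$ over $\R$. The left-hand side becomes $\iint_{\R\times\R} W^\h (x,\xi ) d\xi\, dx$ and the right-hand side becomes $\int_\R |\varphi(x)|^2 dx = \|\varphi\|_{L^2}^2$, which is exactly the claim.

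Alternatively — and I would mention this as a cross-check — one can integrate the second identity of Lemma \ref{Lem_1}, $\int_\R W^\h (x,\xi ) dx = \frac{2\pi}{\h}\bigl|\hat\varphi(\xi/\h)\bigr|^2$, with respect to $\xi$; the substitution $\eta = \xi/\h$ gives $\int_\R \frac{2\pi}{\h}|\hat\varphi(\xi/\h)|^2 d\xi = 2\pi \int_\R |\hat\varphi(\eta)|^2 d\eta$, and by the Plancherel identity (in the normalization used in the excerpt, where $\int |\hat\varphi|^2 d\eta = \frac{1}{2\pi}\int |\varphi|^2 dx$) this again equals $\|\varphi\|_{L^2}^2$. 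The consistency of the two computations is a sanity check on the normalization constants.

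The only technical point requiring care is the legitimacy of iterating the integrals, i.e. that $\iint W^\h\, d\xi\, dx$ is well-defined as an iterated integral and that the order in which the two integrations over $\R^2$ are performed does not matter. For a general $\varphi \in L^2$ the function $W^\h$ need not be in $L^1(\R^2)$, so a bare appeal to Fubini is not quite enough; however, Lemma \ref{Lem_1} already controls each marginal ($x \mapsto |\varphi(x)|^2$ and $\xi \mapsto \frac{2\pi}{\h}|\hat\varphi(\xi/\h)|^2$ are both integrable), so it suffices to invoke the computation there — in particular the change of variables $u = x + \tfrac12\h y$, $v = x - \tfrac12\h y$ already carried out — and then integrate once more. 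I expect this bookkeeping step to be the main (and still essentially routine) obstacle; everything else is a one-line consequence of Lemma \ref{Lem_1}.
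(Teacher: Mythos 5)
Your proposal is correct and follows essentially the same route as the paper: integrate the first marginal identity of Lemma \ref{Lem_1} over $x$ to obtain $\iint_{\R\times\R} W^\h\, d\xi\, dx = \int_\R |\varphi(x)|^2 dx = \|\varphi\|_{L^2}^2$. Your additional remarks (the Plancherel cross-check and the caution that $W^\h$ need not lie in $L^1(\R^2)$, cf.\ Remark \ref{Nota2}, so the double integral is to be read as the iterated integral from Lemma \ref{Lem_1}) are more careful than the paper's one-line argument but do not change the approach.
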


\begin {proof}

From Lemma \ref {Lem_1} immediately follows that 
\be
\iint_{\R \times \R} W^\h (x,\xi ) d\xi dx = \int_{\R} |\varphi (x)|^2 dx = \| \varphi (\cdot ) \|^2_{L^2 (\R , dx )} \, . 
\ee
\end {proof}

\begin {remark} \label {Nota2}
The previous Lemma does not prove that $W^\h (x,\xi )$ belongs to $L^1 (\R \times \R , dx d\xi)$. \ Indeed, in Example \ref {Es1}, we consider a function 
$\varphi \in L^2 (\R , dx)$ such that its Wigner transform does not belongs to $L^1  (\R \times \R , dx d\xi)$.
\end {remark}

\begin {lemma} \label {Lem_3}
Let $\varphi_1 , \, \varphi_2 \in L^2 (\R , dx)$ and let $W^\h_j = {\W}^\h (\varphi_j )$, $j=1,2$, then 
\be
\left | \langle \varphi_1 , \varphi_2 \rangle_{L^2 (\R , dx)} \right |^2 = 2 \pi \h \langle W_1^\h , W_2^\h \rangle_{L^2 (\R \times \R , dx d\xi )} \, .  
\ee
In particular, if $W^\h = {\W}^\h (\varphi )$, $\varphi \in L^2 (\R ,dx)$, then $W^\h \in L^2 (\R \times \R , dx d\xi )$ and 
\be
\| W^\h (\cdot , \cdot ) \|_{L^2 (\R \times \R, dx d\xi)} = \sqrt {\frac {1}{2\pi\h}} 
\| \varphi (\cdot ) \|_{L^2 (\R , dx)}^2\, .
\ee
\end {lemma}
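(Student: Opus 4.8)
The plan is to derive the stated identity — which is Moyal's formula — from Parseval's identity for the Fourier transform, along the lines of the proof of Lemma \ref {Lem_1}. Recall that, with the convention $\hat f (\xi ) = \frac {1}{2\pi}\int_\R e^{-iy\xi} f(y) dy$ used there, $W_j^\h (x,\xi ) = \hat \nu_{\h ,j}(\xi ;x)$ is the Fourier transform in $y$ of $\nu_{\h ,j}(y;x):= \varphi_j \left ( x+\frac 12 \h y \right ) \overline {\varphi_j \left ( x - \frac 12 \h y \right )}$, cf.\ (\ref {Eq6}). First I would note that for a.e.\ fixed $x$ one has $\nu_{\h ,j}(\cdot ; x) \in L^2 (\R , dy)$: the substitution $u = x+ \frac 12 \h y$ gives $\int_\R |\nu_{\h ,j}(y;x)|^2 dy = \frac 2\h \int_\R |\varphi_j(u)|^2 |\varphi_j(2x-u)|^2 du = \frac 2\h \left ( |\varphi_j|^2 * |\varphi_j|^2 \right )(2x)$, which is finite for a.e.\ $x$ because $|\varphi_j|^2 \in L^1 (\R )$ and the convolution of two $L^1$ functions is again in $L^1$. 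Hence Parseval's identity applies at a.e.\ fixed $x$ and yields
\be
\int_\R \overline {W_1^\h (x,\xi )} W_2^\h (x,\xi ) \, d\xi &=& \frac {1}{2\pi} \int_\R \overline {\nu_{\h ,1}(y;x)} \, \nu_{\h ,2}(y;x) \, dy \, .
\ee

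Next I would integrate this over $x\in \R$. Fubini's theorem is licit here since, by Tonelli and the change of variables $u = x+\frac 12 \h y$, $v = x - \frac 12 \h y$ (whose Jacobian is $\h^{-1}$), one has $\iint_{\R^2} |\nu_{\h ,1}(y;x)| \, |\nu_{\h ,2}(y;x)| \, dy\, dx = \frac 1\h \left ( \int_\R |\varphi_1(u)| \, |\varphi_2(u)| \, du \right )^2 \le \frac 1\h \| \varphi_1 \|_{L^2}^2 \| \varphi_2 \|_{L^2}^2 < \infty$ by Cauchy--Schwarz. Applying the same substitution on the right-hand side of the displayed identity, and using that $\overline {\nu_{\h ,1}(y;x)} \, \nu_{\h ,2}(y;x) = \overline {\varphi_1(u)}\varphi_2(u) \cdot \varphi_1(v) \overline {\varphi_2(v)}$ with $u = x+\frac 12 \h y$, $v=x-\frac 12 \h y$, the double integral factorizes and
\be
\langle W_1^\h , W_2^\h \rangle_{L^2(\R\times\R)} &=& \frac {1}{2\pi\h} \left ( \int_\R \overline {\varphi_1(u)}\varphi_2(u) du \right ) \left ( \int_\R \varphi_1(v) \overline {\varphi_2(v)} dv \right ) = \frac {1}{2\pi \h} \left | \langle \varphi_1 , \varphi_2 \rangle_{L^2} \right |^2 \, ,
\ee
which is the first assertion once both sides are multiplied by $2\pi \h$.

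The particular case follows by taking $\varphi_1 = \varphi_2 = \varphi$: the identity becomes $\| \varphi \|_{L^2}^4 = 2\pi \h \, \| W^\h \|_{L^2 (\R\times\R)}^2$, whence in particular $\| W^\h \|_{L^2(\R\times\R)} < \infty$, i.e.\ $W^\h \in L^2 (\R\times\R , dx\, d\xi )$, and $\| W^\h \|_{L^2(\R\times\R)} = (2\pi\h)^{-1/2} \| \varphi \|_{L^2(\R)}^2$. The step I expect to require the most care is precisely the measure-theoretic bookkeeping just described: since by Remark \ref {Nota2} the function $W^\h$ is in general not in $L^1$, neither the pointwise (in $x$) use of Parseval's identity nor the interchange of the $\xi$- and $x$-integrations can be taken for granted, and both must be justified as above; an alternative would be to establish the identity first for $\varphi_1 , \varphi_2$ in the Schwartz class, where every integral converges absolutely, and then to pass to arbitrary $L^2$ functions by density, using that $(\varphi_1 , \varphi_2 ) \mapsto \widetilde {\W^\h}(\varphi_1 , \varphi_2 )$ is sesquilinear.
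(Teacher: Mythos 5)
Your proof is correct, and in substance it performs the same computation as the paper, but with a different (and more rigorous) justification of the key step. The paper expands $\langle W_1^\h , W_2^\h \rangle$ as a quadruple integral and integrates $e^{i(y_1-y_2)\xi}$ over $\xi$ using the formal Dirac identity (\ref{Eq7}), which produces $\delta(y_1-y_2)$ and collapses the integral; you instead observe, as in the proof of Lemma \ref{Lem_1}, that $W_j^\h(x,\cdot)$ is the $y\to\xi$ Fourier transform of $\nu_{\h,j}(\cdot;x)$ and invoke Parseval at fixed $x$, which is precisely the rigorous counterpart of that delta-function manipulation. What your route buys is that no distributional identity is needed and the measure-theoretic issues are actually addressed: you check that $\nu_{\h,j}(\cdot;x)\in L^2(\R,dy)$ for a.e.\ $x$ (via the convolution $|\varphi_j|^2 * |\varphi_j|^2\in L^1$), and you justify the interchange of the $x$- and $y$-integrations by Tonelli together with the change of variables $u=x+\frac12\h y$, $v=x-\frac12\h y$ and Cauchy--Schwarz — points the paper's ``straightforward calculation'' passes over in silence, and which are not vacuous given Remark \ref{Nota2} ($W^\h$ need not be in $L^1$). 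The constants, the Jacobian $dx\,dy=\h^{-1}du\,dv$, the factorization into $\langle\varphi_1,\varphi_2\rangle\,\overline{\langle\varphi_1,\varphi_2\rangle}$, and the specialization $\varphi_1=\varphi_2$ giving $\|W^\h\|_{L^2}=(2\pi\h)^{-1/2}\|\varphi\|_{L^2}^2$ all check out; your remark that one could alternatively argue on the Schwartz class and extend by density/sesquilinearity is also a legitimate variant, though unnecessary given the direct estimates you supply.
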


\begin {proof}
Recalling that 
\bee
\frac {1}{2\pi } \int_{\R} e^{iu\xi } d \xi = \delta (u)
\label {Eq7}
\eee
is the Dirac's $\delta$ distribution, then a straightforward calculation gives that 
\be
&& \langle W_1^\h , W_2^\h \rangle_{L^2 (\R \times \R, dx d\xi)} 
= \frac {1}{(2\pi )^2} \int_{\R \times \R} dx d\xi \int_{\R} dy_1 \int_{\R} d y_2 e^{i(y_1-y_2) \xi } \times \\ 
&& \ \ \times 
\varphi_1 \left ( x - \frac 12 \h y_1 \right ) \overline {\varphi_1 \left ( x + \frac 12 \h y_1 \right )}
\varphi_2 \left ( x + \frac 12 \h y_2 \right ) \overline {\varphi_2 \left ( x - \frac 12 \h y_2 \right )} \\
&& \ \ = \frac {1}{2\pi \h} \left | \langle \varphi_1 , \varphi_2 \rangle_{L^2 (\R , dx)} \right |^2
\ee
\end {proof}

\begin {remark} \label {Nota2Bis} Let $p\in [+1,+\infty ]$, and let 
\be
L^p_{\R} (\R \times \R , dx d\xi ) := \left \{ w \in L^p (\R \times \R , dx d\xi ) \ : \ w (x,\xi )\ \mbox {is real-valued} \right \} \, .
\ee
Then, by means of the previous Lemma we can conclude that the Wigner transform maps the Hilbert space $L^2 (\R , dx)$ in the Hilbert space $L^2_{\R} 
(\R \times \R , dx d\xi )$:
\be
\varphi (x) \in L^2 (\R , dx) \to \left [ \W^\h (\varphi )\right ] (x, \xi ) \in L^2_{\R} (\R \times \R , dx d\xi )\, . 
\ee
\end {remark}

In fact, $ \W^\h (\varphi ) \in L^2_{\R} \cap L^\infty_{\R}$; indeed:

\begin {lemma} \label {Lem_4_Bis}
Let $\varphi \in L^2 (\R , dx)$ and let $W^\h = {\W}^\h (\varphi )$, then 
\be
\| W^\h (x,\xi ) \|_{L^{\infty} (\R \times \R )} \le \frac {1}{\pi \h} \| \varphi (\cdot ) \|^2_{L^2 (\R ,dx)} \, .
\ee
\end {lemma}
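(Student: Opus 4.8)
The plan is to bound the integral defining $W^\h(x,\xi)$ directly by the Cauchy--Schwarz inequality, exploiting the fact that the two factors in the integrand are, up to complex conjugation, translates of $\varphi$ evaluated at arguments symmetric about $x$. First I would write, for fixed $(x,\xi)$,
\be
|W^\h(x,\xi)| \le \frac{1}{2\pi} \int_{\R} \left| \varphi\left(x + \tfrac12 \h y\right) \right| \, \left| \varphi\left(x - \tfrac12 \h y\right) \right| dy .
\ee
Then Cauchy--Schwarz in the variable $y$ gives
\be
|W^\h(x,\xi)| \le \frac{1}{2\pi} \left( \int_{\R} \left| \varphi\left(x + \tfrac12 \h y\right) \right|^2 dy \right)^{1/2} \left( \int_{\R} \left| \varphi\left(x - \tfrac12 \h y\right) \right|^2 dy \right)^{1/2} .
\ee

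The next step is the change of variables: in the first integral set $u = x + \tfrac12 \h y$, so $dy = \tfrac{2}{\h} du$ and the integral becomes $\tfrac{2}{\h} \|\varphi\|_{L^2}^2$; likewise $v = x - \tfrac12 \h y$ in the second integral gives the same value $\tfrac{2}{\h}\|\varphi\|_{L^2}^2$ (the sign of the Jacobian is irrelevant since we integrate over all of $\R$). Substituting back,
\be
|W^\h(x,\xi)| \le \frac{1}{2\pi} \cdot \frac{2}{\h} \|\varphi\|_{L^2}^2 = \frac{1}{\pi \h} \|\varphi\|_{L^2(\R,dx)}^2 ,
\ee
and since the bound is uniform in $(x,\xi)$ we conclude $\|W^\h\|_{L^\infty(\R\times\R)} \le \tfrac{1}{\pi\h}\|\varphi\|_{L^2}^2$, which is the claim; combined with Lemma \ref{Lem_3} this also shows $\W^\h(\varphi) \in L^2_{\R} \cap L^\infty_{\R}$.

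There is really no serious obstacle here: the estimate is a one-line Cauchy--Schwarz argument, and the only point requiring a word of care is that the integrand $\nu_\h(y;x)$ from \eqref{Eq6} is only known to be in $L^1$ in $y$ for a.e.\ $x$ (so that its Fourier transform $W^\h$ is defined pointwise a.e.), but the bound above holds precisely at those $x$ for which $\varphi(x\pm\tfrac12\h y)$ is square-integrable in $y$, i.e.\ for a.e.\ $x$ by Fubini, which is exactly what an $L^\infty$ bound requires. If one prefers, the same conclusion follows from the general fact that a Fourier transform is bounded by the $L^1$ norm of its argument, $\|\hat\nu_\h(\cdot;x)\|_{L^\infty_\xi} \le \tfrac{1}{2\pi}\|\nu_\h(\cdot;x)\|_{L^1_y}$, followed by the same Cauchy--Schwarz estimate on $\|\nu_\h(\cdot;x)\|_{L^1_y}$.
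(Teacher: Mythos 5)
Your proposal is correct and follows essentially the same route as the paper: bound $|W^\h(x,\xi)|$ by the modulus of the integrand, apply Cauchy--Schwarz in $y$, and use the change of variables $u = x+\tfrac12\h y$, $v = x-\tfrac12\h y$ (each contributing a factor $\tfrac{2}{\h}$) to obtain the uniform bound $\tfrac{1}{\pi\h}\|\varphi\|_{L^2}^2$. The extra remark on the a.e.\ pointwise definition of $W^\h$ is a harmless refinement the paper does not bother with.
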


\begin {proof}
The proof is quite simple: 
\be
\left | W^\h (x,\xi )\right | &\le & \frac {1}{2\pi} \int_{\R} \left | \varphi \left ( x+ \frac 12 \h y \right ) \right |\, 
\left | \varphi \left ( x- \frac 12 \h y \right ) \right | dy \\
&\le & \frac {1}{2\pi} \left \{ \int_{\R} \left | \varphi \left ( x+ \frac 12 \h y \right ) \right |^2 dy \right \}^{1/2} 
\left \{ \int_{\R} \left | \varphi \left ( x- \frac 12 \h y \right ) \right |^2 dy \right \}^{1/2} \\ 
&= & \frac {1}{\h \pi} \left \{ \int_{\R} \left | \varphi (u ) \right |^2 du \right \}^{1/2} 
\left \{ \int_{\R} \left | \varphi ( v ) \right |^2 dv \right \}^{1/2} = \frac {1}{\pi \h} \| \varphi (\cdot ) \|^2_{L^2 (\R ,dx)}
\ee
where we set $u=x+ \frac 12 \h y$ and $ v=x- \frac 12 \h y$.
\end {proof}

\subsection {Reversibility of the Wigner transform}\label {Sez2_2}

If $\varphi \in L^2$ is a {\bf positive real-valued function} then the transformation $\W^\h (\varphi )$ is invertible; indeed, from Lemma \ref {Lem_1} we can obtain 
the absolute value of the wave-function:
\be
\varphi (x) = |\varphi (x )|= \sqrt {\int_\R W^\h (x ,\xi ) d\xi }\, .
\ee

In general, since the wave-function $\varphi$ is not a real-valued function or it does not has a definite sign, the Wigner transform can be inverted as follows.

\begin {lemma} \label {Lem_4}
Let $\varphi \in L^2 (\R , dx)$ and let $W^\h = {\W}^\h (\varphi )$, and let $x^\star$ be such that $\int_\R W^\h (x^\star ,\xi ) d\xi \not= 0$. \ Then 
$\varphi (x^\star ) \not= 0$ and 
\bee
\varphi (x) = \frac {1}{\overline {\varphi (x^\star )}} \int_{\R} W^\h \left ( \frac {x+x^\star}{2} ,\xi \right ) e^{i(x-x^\star )\xi /\h } d\xi \, . \label {Eq8}
\eee
\end {lemma}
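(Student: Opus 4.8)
The plan is to recover $\varphi$ by Fourier inversion from the identity $W^\h(x,\cdot)=\hat\nu_\h(\cdot;x)$ already used in the proof of Lemma~\ref{Lem_1}, after specializing the two free variables appropriately. So the statement is really just a re-reading of Lemma~\ref{Lem_1} together with Fourier inversion.

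First I would note that the hypothesis is exactly a non-vanishing condition on $\varphi(x^\star)$: by Lemma~\ref{Lem_1}, $\int_\R W^\h(x^\star,\xi)\,d\xi=|\varphi(x^\star)|^2$, so $\int_\R W^\h(x^\star,\xi)\,d\xi\neq0$ forces $\varphi(x^\star)\neq0$ and the factor $1/\overline{\varphi(x^\star)}$ in (\ref{Eq8}) makes sense. Next, recall that, for fixed $x$, $W^\h(x,\cdot)$ is the Fourier transform (with the $1/2\pi$ normalization) of the function $\nu_\h(\cdot;x)$ of (\ref{Eq6}); moreover $\nu_\h(\cdot;x)\in L^1(\R,dy)$ for almost every $x$, since Cauchy--Schwarz and the substitutions $u=x+\frac12\h y$, $v=x-\frac12\h y$ give $\int_\R|\nu_\h(y;x)|\,dy\le\frac{2}{\h}\|\varphi\|_{L^2}^2$. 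Performing the Fourier inversion, consistently with the $\delta$-representation (\ref{Eq7}), one obtains
\[ \nu_\h(y;x)=\int_\R e^{iy\xi}\,W^\h(x,\xi)\,d\xi . \]
Now I would evaluate this identity at the point $\frac{x+x^\star}{2}$ in place of $x$ and at $\frac{x-x^\star}{\h}$ in place of $y$: the arguments of $\varphi$ appearing in $\nu_\h$ become $\frac{x+x^\star}{2}+\frac12\h\cdot\frac{x-x^\star}{\h}=x$ and $\frac{x+x^\star}{2}-\frac12\h\cdot\frac{x-x^\star}{\h}=x^\star$, so the left-hand side equals $\varphi(x)\,\overline{\varphi(x^\star)}$, while the right-hand side is precisely $\int_\R W^\h\!\left(\frac{x+x^\star}{2},\xi\right)e^{i(x-x^\star)\xi/\h}\,d\xi$. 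Dividing by $\overline{\varphi(x^\star)}\neq0$ yields (\ref{Eq8}).

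The step needing care is the Fourier inversion itself. From $\nu_\h(\cdot;x)\in L^1$ we only know that $W^\h(x,\cdot)$ is bounded and continuous (as in Lemma~\ref{Lem_4_Bis}), but it need not belong to $L^1(\R,d\xi)$ --- cf. Remark~\ref{Nota2} --- so the $\xi$-integral on the right of (\ref{Eq8}) should in general be understood as an $L^2$-limit of truncations (or in the sense of tempered distributions), and the resulting identity then holds for almost every $x$; it holds literally pointwise wherever $\varphi$ and $W^\h(\frac{x+x^\star}{2},\cdot)$ admit honest pointwise values, e.g. when $\varphi$ is continuous. Alternatively, one can first establish (\ref{Eq8}) for Schwartz $\varphi$, where every integral converges absolutely and all manipulations are elementary, and then extend to general $\varphi\in L^2(\R,dx)$ by the $L^2$-continuity of the Wigner transform recorded in Lemma~\ref{Lem_3}.
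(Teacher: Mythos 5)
Your proposal is correct and follows essentially the same route as the paper: both deduce $\varphi(x^\star)\neq 0$ from Lemma \ref{Lem_1} and then invert the Fourier relation between $W^\h(x,\cdot)$ and $\nu_\h(\cdot;x)$, specializing the arguments so that $\nu_\h$ produces exactly $\varphi(x)\overline{\varphi(x^\star)}$ (the paper does this by inserting $f(x)=\tfrac12\h x+x^\star$ and using the formal identity (\ref{Eq7}), which is the same substitution you make). The only difference is that you justify the inversion step (truncation/$L^2$-limit or density of Schwartz functions) where the paper argues formally with the Dirac delta; that is a refinement of the same argument, not a different approach.
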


\begin {proof}
Since $\int_\R W^\h (x^\star ,\xi ) d\xi = |\varphi (x^\star )|^2$ then $\varphi (x^\star ) \not= 0$ immediately follows from the assumption. \ For any real-valued 
function $f(x)$ it follows that 
\be
\int_{\R} W^\h \left [ f(x) ,\xi \right ] e^{ix \xi } d \xi &=& \frac {1}{2\pi} \iint_{\R^2} d\xi dy 
e^{i(x-y)\xi } \varphi \left ( f(x) + \frac 12 \h y \right ) \overline {\varphi \left ( f(x) - \frac 12 \h y \right )} \\ 
&=& \varphi \left ( f(x) + \frac 12 \h x \right ) \overline {\varphi \left ( f(x) - \frac 12 \h x \right )}
\ee
since (\ref {Eq7}). \ In particular, if we set $f(x) =\frac 12 \h x + x^\star$ then, for any $x\in \R$, it follows that
\be
\int_{\R} W^\h \left ( \frac 12 \h x + x^\star ,\xi \right ) e^{ix \xi } d \xi = \varphi (\h x + x^\star ) \overline {\varphi (x^\star )}\, .
\ee
If we call $z=\h x+x^\star$, i.e. $x= \frac {z-x^\star}{\h}$, then we obtain 
\be
{\overline {\varphi (x^\star )}} \varphi (z) =  \int_{\R} W^\h \left ( \frac {z+x^\star}{2} ,\xi \right ) e^{i(z-x^\star )\xi /\h } d\xi 
\ee
from which (\ref {Eq8}) follows. \end {proof}

If we denote $\theta^\star $ the phase of $\varphi (x^\star )$ then (\ref {Eq8}) takes the form
\bee
\varphi (x) = \frac {e^{i\theta^\star }}{\sqrt {\left | \int_\R W^\h (x^\star ,\xi ) d\xi \right |}} \int_{\R} W^\h \left ( \frac {x+x^\star}{2} ,\xi \right ) 
e^{i(x-x^\star )\xi /\h } d\xi \, . \label {Eq9}
\eee
Thus, the Wigner transform $\W^\h$ is invertible up to a phase factor. \ That is we have proved that:

\begin {theorem} \label {Teo_1}
The map
\be
\W^\h :L^2 (\R ,dx) \to \D :=\W^\h \left ( L^2 (\R ,dx) \right ) \subseteq L^2_{\R} (\R \times \R , dx d\xi )
\ee
is, up to a phase factor independent of $x$, a  one-to-one map and the inverse map $\left [ \W^\h\right ]^{-1}$ is given by (\ref {Eq9}).
\end {theorem}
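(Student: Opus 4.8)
The plan is to establish Theorem \ref{Teo_1} by combining the explicit inversion formula already derived in Lemma \ref{Lem_4} with the observation about phase ambiguity noted in item i. above. The statement has essentially two parts: first, that $\W^\h$ surjects onto $\D$ (which is true by definition of $\D$ as the image), and second, that the only failure of injectivity is multiplication by a unimodular constant, with the inverse given explicitly by (\ref{Eq9}). So the real content is the ``up to a phase factor'' injectivity claim and the verification that (\ref{Eq9}) is a well-defined left inverse modulo that ambiguity.

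First I would fix $W^\h = \W^\h(\varphi)$ for some $\varphi \in L^2(\R,dx)$, $\varphi \not\equiv 0$. Since $\int_\R W^\h(x,\xi)\,d\xi = |\varphi(x)|^2$ by Lemma \ref{Lem_1} and $\varphi \not\equiv 0$, there exists $x^\star$ with $\int_\R W^\h(x^\star,\xi)\,d\xi \neq 0$; this is the hypothesis needed to invoke Lemma \ref{Lem_4}. Applying that lemma gives $\varphi(x) = \frac{1}{\overline{\varphi(x^\star)}}\int_\R W^\h\!\left(\frac{x+x^\star}{2},\xi\right) e^{i(x-x^\star)\xi/\h}\,d\xi$, and writing $\varphi(x^\star) = |\varphi(x^\star)|e^{i\theta^\star}$ with $|\varphi(x^\star)| = \sqrt{\left|\int_\R W^\h(x^\star,\xi)\,d\xi\right|}$ yields precisely formula (\ref{Eq9}). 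This shows that $\varphi$ is determined by $W^\h$ together with the single real number $\theta^\star$ (the phase of $\varphi$ at $x^\star$). Next, to get the injectivity-up-to-phase statement in clean form: if $\W^\h(\varphi) = \W^\h(\phi)$ with both nonzero, pick a common $x^\star$ where the $\xi$-integral is nonzero (the integrals agree since the Wigner transforms agree, so this is unambiguous); then (\ref{Eq9}) applied to each gives $\varphi(x) = e^{i(\theta^\star_\varphi - \theta^\star_\phi)}\phi(x)$ for all $x$, i.e. $\varphi = \lambda\phi$ with $|\lambda| = 1$ and $\lambda$ independent of $x$. Conversely $\W^\h(e^{i\theta}\varphi) = \W^\h(\varphi)$ is immediate from the definition (\ref{Eq4}) since the two exponential factors cancel, so the phase ambiguity is genuine and is the whole story. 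Together these two directions say exactly that $\W^\h$ descends to a bijection from $L^2(\R,dx)$ modulo unimodular scalars onto $\D$, with inverse (\ref{Eq9}).

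The codomain inclusion $\D \subseteq L^2_\R(\R\times\R,dx\,d\xi)$ requires no new work: it is the combination of Remark \ref{Nota2Bis} (the $L^2$ mapping property, from Lemma \ref{Lem_3}) and item ii. above (real-valuedness). So I would simply cite those. The remaining small point is that (\ref{Eq9}) as written depends on a choice of $x^\star$ and on the phase convention, so strictly one should remark that different admissible choices of $x^\star$ produce the same $\varphi$ up to the same overall phase — but this is already contained in Lemma \ref{Lem_4}, whose conclusion (\ref{Eq8}) is manifestly a genuine identity for $\varphi$ once the value $\varphi(x^\star)$ is known, so the only indeterminacy ever introduced is that one complex number of modulus $\|\varphi\|_{x^\star}$, i.e. the phase $\theta^\star$.

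I do not expect a genuine obstacle here: essentially all the analytic content — the Fourier-inversion computation, the use of (\ref{Eq7}), the change of variables — was already done in Lemma \ref{Lem_4}, and this theorem is a repackaging. If anything, the one place to be slightly careful is the pointwise-versus-$L^2$ status of the inversion: the integral $\int_\R W^\h(\cdot,\xi)e^{i(\cdot)\xi/\h}\,d\xi$ should be understood in whatever sense makes the Fourier inversion in Lemma \ref{Lem_4} valid (e.g. as an $L^2$ limit, or for a.e. $x$), and one should note that the reconstructed function automatically lies in $L^2$ because it equals $\varphi$ a.e. So the proof is short: invoke Remark \ref{Nota2Bis} and item ii. for the codomain, invoke the definition (\ref{Eq4}) for the ``$\supseteq$'' direction of the phase ambiguity, and invoke Lemma \ref{Lem_4} together with (\ref{Eq9}) for the injectivity-up-to-phase and the explicit inverse.
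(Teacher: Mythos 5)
Your proposal is correct and takes essentially the same route as the paper: the paper also obtains Theorem \ref{Teo_1} directly from Lemma \ref{Lem_4}, writing $\varphi (x^\star ) = |\varphi (x^\star )| e^{i\theta^\star}$ with $|\varphi (x^\star )|$ recovered from $\int_\R W^\h (x^\star ,\xi )\, d\xi$ via Lemma \ref{Lem_1}, so that (\ref{Eq8}) becomes (\ref{Eq9}) and the only indeterminacy is the phase $\theta^\star$. Your additional explicit checks (the codomain inclusion via Lemma \ref{Lem_3} and the two directions of the up-to-phase bijectivity) are just spelled-out versions of what the paper leaves implicit.
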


Actually $\D \subset L^2_\R$ and a question arises: which conditions on a real-valued square integrable function $w(x,\xi) \in L^2_\R (\R \times \R , dx d\xi )$ must 
be satisfied in order that it is a Wigner function (of a pure state), i.e. $w \in \D$? \ A first answer to this question has been given by \cite {HOSW}; they stated 
a set of conditions that are necessary and sufficient for a function $w\in L^2_\R (\R \times \R , dx d\xi )$ to be a Wigner transform (of a non necessarily pure state): 
that is $w (x,\xi)$ must be normalized, in the sense that $\iint_{\R^2} w(x,\xi ) dx d\xi =1$, and furthermore 
\bee
\langle w , \W^\hbar (\phi ) \rangle_{L^2 (\R \times \R , dx d\xi )} \ge 0 \, , \ \forall \phi \in L^2 (\R , dx) \, . \label {Eq10}
\eee
As pointed out by Narcowich and O'Connel \cite {NOC} such a set of conditions is not very satisfactory because (\ref {Eq10}) is quite hard to check from a practical 
point of view; indeed, they proposed a second set of conditions. \ Namely, let $\tilde w(u,v)$ be the symplectic Fourier transform of $w(x,\xi )$; then 
$w \in L^2_\R (\R \times \R , dx d\xi )$ is a Wigner transform function if, and only if, $\tilde w (0,0)=1$ and $\tilde w (u,v)$ is continuous and it is of 
$\hbar$-positive type (see \cite {K,LMS1,LMS2} for a definition of $\hbar$-positive type). \ Unfortunately, such a criterion cannot distinguish between Wigner 
functions associated with pure states or mixed states. \  Tatarski \cite {Tat} introduced a necessary and sufficient condition for a function 
$w \in L^2_{\R} (\R \times \R , dx d\xi)$ to describe a pure quantum state; in particular, he proved that $w\in {\mathcal D}$, that is $w=\W^\hbar (\varphi)$ 
for some $\varphi \in L^2$, if, and only if, 
\bee
\frac {\partial^2}{\partial x_1 \partial x_2} \ln \left [ Q(x_1,x_2 ) \right ] =0 \label {Eq11}
\eee
where 
\be
Q(x_1 ,x_2 ) := \int_{\R} e^{i \xi (x_1 -x_2 )/\hbar} w\left ( \frac {x_1-x_2}{2} , \xi \right ) d \xi \, .
\ee
If (\ref {Eq11}) is satisfied the wave-function $\varphi (x)$ can be recovered from $w(x,\xi )$ by (\ref {Eq9}), up to a phase factor independent of $x$. 

Finally, we should also mention the results by \cite {DP} where the authors make use of the notion of Narkowich-Wigner spectrum in order to characterize the 
Wigner functions of a pure state.

\subsection {Continuity of the Wigner transform} \label {Sez2_3}

We have seen that the Wigner transform maps the space $L^2 (\R , dx)$ in the space $L^2_\R (\R \times \R , dx d\xi )$. \ We prove now that such a map is continuous; in 
particular the following estimates hold true.

\begin {theorem} \label {Teo_2}
  Let $w_j =\W^\h (\varphi_j )$, where $\varphi_j \in L^2$ and $j=1,2$, then
  \be
  \| w_1 - w_2 \|_{L^2 (\R \times \R, dx d\xi )} \le C \inf_{\theta \in \R}  \| e^{i\theta} \varphi_1 - \varphi_2 \|_{L^2 (\R , dx)}
  \ee
where
\be
C = \frac {\sqrt {2} }{\sqrt {\hbar}} \left (\| \varphi_1 \|_{L^2 (\R , dx)} + \| \varphi_2 \|_{L^2 (\R , dx)} \right ) \, . 
\ee
\end {theorem}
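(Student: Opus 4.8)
The plan is to exploit the bilinearity of $\widetilde{\W^\h}$ together with the polarization-type identity from Lemma \ref{Lem_3}. First I would write, for any fixed $\theta \in \R$, the difference $w_1 - w_2$ in terms of $\widetilde{\W^\h}$. Since $\W^\h(\varphi_j) = \widetilde{\W^\h}(\varphi_j,\varphi_j)$ and since $\W^\h(e^{i\theta}\varphi_1) = \W^\h(\varphi_1) = w_1$, it suffices to estimate $\| \W^\h(e^{i\theta}\varphi_1) - \W^\h(\varphi_2) \|_{L^2}$. Setting $u = e^{i\theta}\varphi_1$ and $v = \varphi_2$, bilinearity gives the algebraic identity
\be
\widetilde{\W^\h}(u,u) - \widetilde{\W^\h}(v,v) = \widetilde{\W^\h}(u-v,u) + \widetilde{\W^\h}(v,u-v),
\ee
which is the analogue of $a^2 - b^2 = (a-b)a + b(a-b)$ adapted to the sesquilinear structure.

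Next I would bound each of the two terms on the right using the Cauchy--Schwarz-type inequality contained in Lemma \ref{Lem_3}. That lemma, in its mixed (polarized) form, gives $\| \widetilde{\W^\h}(\varphi,\phi) \|_{L^2(\R\times\R)} \le \frac{1}{\sqrt{2\pi\h}} \| \varphi \|_{L^2} \| \phi \|_{L^2}$; this follows from the stated identity $|\langle\varphi_1,\varphi_2\rangle|^2 = 2\pi\h\langle W_1^\h,W_2^\h\rangle$ applied with the understanding that the same computation, carried out for the bilinear object $\widetilde{\W^\h}$ rather than the diagonal $\W^\h$, yields $\langle \widetilde{\W^\h}(\varphi_1,\phi_1), \widetilde{\W^\h}(\varphi_2,\phi_2)\rangle = \frac{1}{2\pi\h}\langle\varphi_1,\varphi_2\rangle\overline{\langle\phi_1,\phi_2\rangle}$, hence the norm bound on taking $\varphi_1=\varphi_2$, $\phi_1=\phi_2$. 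Applying this to the two terms above:
\be
\| \widetilde{\W^\h}(u-v,u) \|_{L^2} \le \tfrac{1}{\sqrt{2\pi\h}} \| u-v \|_{L^2} \| u \|_{L^2}, \quad \| \widetilde{\W^\h}(v,u-v) \|_{L^2} \le \tfrac{1}{\sqrt{2\pi\h}} \| v \|_{L^2} \| u-v \|_{L^2}.
\ee
Since $\| u \|_{L^2} = \| \varphi_1 \|_{L^2}$ and $\| v \|_{L^2} = \| \varphi_2 \|_{L^2}$, adding these via the triangle inequality gives $\| w_1 - w_2 \|_{L^2} \le \frac{1}{\sqrt{2\pi\h}} (\| \varphi_1 \|_{L^2} + \| \varphi_2 \|_{L^2}) \| e^{i\theta}\varphi_1 - \varphi_2 \|_{L^2}$. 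Taking the infimum over $\theta$ yields the claimed bound; the constant $C = \sqrt{2}/\sqrt{\h}\,(\|\varphi_1\|+\|\varphi_2\|)$ is, up to the harmless factor $\sqrt{\pi}$ absorbed in passing from $\frac{1}{\sqrt{2\pi\h}}$ to $\frac{\sqrt 2}{\sqrt\h}$ (i.e., $\frac{1}{\sqrt{2\pi\h}} \le \frac{\sqrt 2}{\sqrt\h}$), exactly what the theorem states.

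The only genuine obstacle is the bookkeeping needed to justify the polarized version of Lemma \ref{Lem_3}: the proof of that lemma as written treats the diagonal case $\widetilde{\W^\h}(\varphi,\varphi)$, and I would need to repeat the same $\delta$-function computation (equation (\ref{Eq7})) for the general bilinear pairing to extract $\langle \widetilde{\W^\h}(\varphi_1,\phi_1), \widetilde{\W^\h}(\varphi_2,\phi_2)\rangle = \frac{1}{2\pi\h}\langle\varphi_1,\varphi_2\rangle\overline{\langle\phi_1,\phi_2\rangle}$. This is entirely routine — the same change of variables $u = x + \frac12\h y$, $v = x - \frac12\h y$ plus the $\xi$-integration producing the delta — but it is the one step that is not literally quoted from an earlier statement. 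Everything else is the elementary $a^2-b^2$ factorization, the triangle inequality, and taking an infimum, so I would present the polarization identity for $\widetilde{\W^\h}$ as a one-line remark and then assemble the estimate in three lines.
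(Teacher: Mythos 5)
Your proof is correct and is essentially the paper's own argument: your sesquilinear factorization $\widetilde{\W^\h}(u,u)-\widetilde{\W^\h}(v,v)=\widetilde{\W^\h}(u-v,u)+\widetilde{\W^\h}(v,u-v)$ is exactly the decomposition (\ref{Eq13}), and the polarized Moyal identity you invoke is obtained by the very same Plancherel-in-$y$ plus change of variables $u=x+\frac 12 \h y$, $v=x-\frac 12 \h y$ computation that the paper carries out inline on each of the two cross terms. The only difference is bookkeeping: you use the triangle inequality together with the exact cross-transform norm $\frac{1}{\sqrt{2\pi\h}}\|u-v\|_{L^2}\|u\|_{L^2}$ instead of the paper's $(a+b)^2\le 2(a^2+b^2)$ step, which even gives a slightly sharper constant, and since $\frac{1}{\sqrt{2\pi\h}}\le \frac{\sqrt 2}{\sqrt \h}$ the stated bound follows.
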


\begin {proof} Let us denote
\be
\varphi_{j,\pm} := \varphi_{j,\pm} (x,y ) = e^{i\theta } \varphi_j \left ( x \pm \frac 12 \h y \right )  \, , \ j=1,2 \, , \ 
\ee
where $\theta \in \R$ is independent of $x$, and 
\be 
\phi_\pm := \phi_\pm (x,y)= \varphi_{1,\pm} (x,y)- \varphi_{2,\pm}(x,y)\, , 
\ee
then
\bee
 &&\left [ w_1 (x,\xi ) - w_2 (x,\xi ) \right ] = \frac {1}{2\pi} \int_\R e^{-i \xi y} \left [ \varphi_{1,+} \overline {\varphi_{1,-}} - \varphi_{2,+} 
 \overline {\varphi_{2,-}} \right ] dy \label {Eq12}\\ 
&& \ \ =  \frac {1}{2\pi} \int_\R e^{-i \xi y} \left [ (\varphi_{1,+} - \varphi_{2,+} ) \overline {\varphi_{1,-}} + \varphi_{2,+} ( \overline {\varphi_{1,-}}- 
\overline {\varphi_{2,-}}) \right ] dy \label {Eq13}\\ 
&& \ \ = \left [ {\mathcal F}_y (\phi_+ \overline {\varphi_{1,-}} )\right ] (x,\xi ) + \left [ {\mathcal F}_y (\overline {\phi_-}  \varphi_{2,+} )\right ] (x,\xi ) 
\label {Eq14}
\eee
where $ {\mathcal F}_y (f)$ denotes the Fourier transform (with respect to $y$) of a function $f$. 

Furthermore, recalling that $\| {\mathcal F}_y f \|_{L^2} = \|  f \|_{L^2}$ then 
\be
&& \| w_1 (x,\xi ) - w_2 (x,\xi )\|_{L^2 (\R \times \R , dx d\xi )}^2 = \int_\R dx \int_\R d \xi |w_1 (x,\xi ) - w_2 (x,\xi )|^2  \\ 
&& \ = \int_\R d x \| w_1 (x, \cdot ) - w_2 (x, \cdot ) \|^2_{L^2 (\R , d\xi )} \\
&& \ = \int_\R d x \| \left [ {\mathcal F}_y (\phi_+ \overline {\varphi_{1,-}} )\right ] (x,\cdot  ) + \left [ {\mathcal F}_y (\overline {\phi_-}  
\varphi_{2,+} )\right ] (x, \cdot  ) \|^2_{L^2 (\R , d\xi )} \\ 
&& \ = \int_\R d x \|  (\phi_+ \overline {\varphi_{1,-}} )(x,\cdot  ) + (\overline {\phi_-}  \varphi_{2,+} )(x, \cdot  ) \|^2_{L^2 (\R , d\xi )} \\ 
&& \ \le 2 \int_\R d x \| (\phi_+ \overline {\varphi_{1,-} })(x,\cdot  )  \|^2_{L^2 (\R , d\xi )} + 
2 \int_\R d x \| (\phi_- \overline {\varphi_{2,+}} ) (x,
\cdot  ) \|^2_{L^2 (\R , d\xi )} \, . 
\ee
We restrict now our attention to the first integral 
\be
&& 
\int_\R d x \| (\phi_+ \overline {\varphi_{1,-}} ) (x,\cdot  )  \|^2_{L^2 (\R , dy )} = \int_\R d x \int_{\R} dy | (\phi_+ \overline {\varphi_{1,-}} ) (x,y  )  |^2 \\ 
&& \ \ = \int_\R d x \int_{\R} dy \left | e^{i\theta} \varphi_1 \left ( x + \frac 12 \hbar y \right ) - \varphi_2 \left ( x + \frac 12 \hbar y \right ) \right |^2 
\left | \overline {e^{i\theta} \varphi_1 \left ( x - \frac 12 \hbar y \right ) } \right |^2 
\ee
By means of the change of variable $x\to z = x + \frac 12 \hbar y $ the integral above becomes 
\be
\int_\R d z \left | e^{i\theta} \varphi_1 \left ( z \right ) - \varphi_2 \left ( z \right ) \right |^2 \int_{\R} dy  \left | \overline { e^{i\theta } \varphi_1 
\left ( z -  \hbar y \right ) }\right |^2 =\frac {1}{\hbar} \| e^{i\theta } \varphi_1 - \varphi_2 \|^2_{L^2} \| \varphi_1 \|_{L^2}^2 \, . 
\ee
The other integral can be similarly treated proving thus the Theorem. \end {proof}

We can also prove that 

\begin {lemma} \label {Lem_6}
We have that 
\bee
\| w_1 - w_2 \|_{L^\infty (\R \times \R ,dx d\xi ) } \le C \inf_{\theta \in \R}  \| e^{i\theta } \varphi_1 - \varphi_2 \|_{L^2 (\R , dx)}\label {Eq15}
\eee
where $C= \frac {1}{\pi \h } \left [ \| \varphi_1 \|_{L^2 (\R , dx)} + \| \varphi_2 \|_{L^2 (\R , dx)} \right ]$.
\end {lemma}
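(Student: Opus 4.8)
The plan is to reproduce the argument of Theorem~\ref{Teo_2} up to the decomposition of $w_1-w_2$, and then to estimate the two resulting $y$-Fourier transforms in $L^\infty$ rather than in $L^2$, exactly in the spirit of Lemma~\ref{Lem_4_Bis}. So I would take $\theta\in\R$ independent of $x$, keep the notation $\varphi_{j,\pm}$, $\phi_\pm$ from the proof of Theorem~\ref{Teo_2}, and start from the identity (\ref{Eq14}), which expresses $w_1(x,\xi)-w_2(x,\xi)$ as $[{\mathcal F}_y(\phi_+\overline{\varphi_{1,-}})](x,\xi)+[{\mathcal F}_y(\overline{\phi_-}\,\varphi_{2,+})](x,\xi)$.

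The key step is the elementary bound $|[{\mathcal F}_y f](x,\xi)|\le\frac{1}{2\pi}\int_\R|f(x,y)|\,dy$ applied to each summand, followed by the Cauchy--Schwarz inequality in $y$, exactly as in the proof of Lemma~\ref{Lem_4_Bis}:
\be
\bigl|w_1(x,\xi)-w_2(x,\xi)\bigr| &\le& \frac{1}{2\pi}\|\phi_+(x,\cdot)\|_{L^2(\R,dy)}\,\|\varphi_{1,-}(x,\cdot)\|_{L^2(\R,dy)} \\
&& {}+\frac{1}{2\pi}\|\phi_-(x,\cdot)\|_{L^2(\R,dy)}\,\|\varphi_{2,+}(x,\cdot)\|_{L^2(\R,dy)}\,.
\ee
Then I would evaluate these four one-dimensional norms by the change of variable $u=x\pm\frac12\h y$ used throughout Section~\ref{Sez2_1}: this gives $\|\phi_\pm(x,\cdot)\|_{L^2(\R,dy)}^2=\frac{2}{\h}\|e^{i\theta}\varphi_1-\varphi_2\|_{L^2(\R,dx)}^2$ and $\|\varphi_{j,\pm}(x,\cdot)\|_{L^2(\R,dy)}^2=\frac{2}{\h}\|\varphi_j\|_{L^2(\R,dx)}^2$, all independent of $x$ and $\xi$. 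Substituting, the right-hand side becomes $\frac{1}{\pi\h}\bigl(\|\varphi_1\|_{L^2(\R,dx)}+\|\varphi_2\|_{L^2(\R,dx)}\bigr)\|e^{i\theta}\varphi_1-\varphi_2\|_{L^2(\R,dx)}$. Since this bound is uniform in $(x,\xi)$, taking the supremum over $(x,\xi)$ and then the infimum over $\theta\in\R$ produces (\ref{Eq15}) with the stated $C$.

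There is no genuine obstacle here: the statement is a routine $L^\infty$ analogue of Theorem~\ref{Teo_2}, obtained by coupling the decomposition used there with the $L^1\to L^\infty$ Fourier estimate of Lemma~\ref{Lem_4_Bis}. The only points deserving a little care are to factor out the phase $e^{i\theta}$ before applying Cauchy--Schwarz, and to track the factor $\frac{2}{\h}$ produced by each change of variable so that the two contributions merge into the single constant $\frac{1}{\pi\h}\bigl(\|\varphi_1\|_{L^2}+\|\varphi_2\|_{L^2}\bigr)$.
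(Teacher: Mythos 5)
Your proposal is correct and follows essentially the same route as the paper: the paper likewise starts from the decomposition (\ref{Eq13})--(\ref{Eq14}), bounds the integrand pointwise, applies Cauchy--Schwarz in $y$, and evaluates the resulting one-dimensional norms by the change of variable $u=x\pm\frac12\h y$, each contributing the factor $\sqrt{2/\h}$ that yields the constant $\frac{1}{\pi\h}\bigl(\|\varphi_1\|_{L^2}+\|\varphi_2\|_{L^2}\bigr)$. Your phrasing via the $L^1\to L^\infty$ Fourier bound is just a restatement of the paper's direct estimate of the integral, so the two arguments coincide.
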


\begin {proof}
Estimate (\ref {Eq15}) immediately follows from (\ref {Eq13}); indeed:
\be
\left | w_1 (x,\xi ) - w_2 (x,\xi )  \right | &\le& 
\frac {1}{2\pi} \int_\R \left [ \left |\varphi_{1,+} - \varphi_{2,+} \right | \, \left | \bar \varphi_{1,-} \right |+ \left | \varphi_{2,+} \right | \, \left |\overline 
{\varphi_{1,-}} - \overline { \varphi_{2,-}} \right | \right ] dy \\
&\le & \frac {1}{2\pi} \left [  \left \|\varphi_{1,+} (x,\cdot )- \varphi_{2,+} (x,\cdot )\right \|_{L^2 (\R , dy)} \left \| \overline {\varphi_{1,-}  (x,\cdot )}
\right \|_{L^2 (\R , dy)} + \right. \\ 
&& \left. \ \ + \left \|\varphi_{1,-} (x,\cdot )- \varphi_{2,-} (x,\cdot )\right \|_{L^2 (\R , dy)} \left \| \varphi_{2,+}  (x,\cdot )\right \|_{L^2 (\R , dy)}
\right ]
\ee
where, for instance, 
\be
\left \| \varphi_{2,+}  (x,\cdot )\right \|_{L^2 (\R , dy)} &=& \left [ \int_\R |\varphi_{2,+} (x,y) |^2 dy \right ]^{1/2} = 
\left [ \int_\R \left |\varphi_{2} \left (x+ \frac 12 \h y \right ) \right |^2 dy \right ]^{1/2} \\
&=& \left [ \frac {2}{\h} \int_\R |\varphi_{2} (z) |^2 dz \right ]^{1/2} = \sqrt {\frac {2}{\h} } \left \| \varphi_{2} \right \|_{L^2 (\R , dx)} \, . 
\ee
\end {proof}

\subsection {Examples} \label {Sez2_4} Let us consider some examples of computation of Wigner transform of normalized functions $\varphi \in L^2 (\R , dx)$.

\subsubsection {Example 1.} \label {Es1} \ Let $\varphi (x) = \frac {1}{\sqrt {2R}} \chi_{[-R,+R]} (x)$ where $\chi$ is the characteristic function, 
i.e. $\chi_A (x)=1$ if $x \in A$ and $\chi_A(x)=0$ if $x\notin A$, and $R>0$ is fixed. \ Then
\be
W^h (x,\xi ) = \frac {1}{4\pi R} \int_\R e^{-i \xi y} \chi_{[-R,+R]} \left ( x + \frac 12 \h y \right ) \chi_{[-R,+R]} \left ( x - \frac 12 \h y \right ) dy 
\ee
is an even function with respect to $x$. \ We assume, for argument's sake, that $x \ge 0$. \ Hence
\be
W^\h (x,\xi )= 0 \ \mbox { if } \ x \ge R 
\ee
and 
\be
W^\h (x,\xi )= \frac {1}{4\pi R} \int_{-2(R-x)/\h}^{2(R-x)/\h} e^{-i \xi y} dy = \frac {1}{2\pi R \xi} \sin \left [ \frac {2\xi}{\h} (R-x) \right ] 
\ \mbox { if } \ x \le R \, . 
\ee
In conclusion
\be
W^\h (x,\xi ) = \frac {1}{2\pi R \xi} \sin \left [ \frac {2\xi}{\h} (R-|x|) \right ] \chi_{[-R,+R]}(x) \, . 
\ee
One should remark (see Appendix \ref {App0}) that %
\bee
W^\hbar \notin L^1 (\R \times \R , dx d\xi )\, . \label {Eq16}
\eee

\subsubsection {Example 2.} \label {Es2} \ Let $\varphi (x) := \frac {1}{\sqrt[4]{\pi}} e^{-x^2/2}$; then a straightforward calculation gives that 
\be
W^\h (x,\xi )= \frac {1}{\pi \h} e^{- \left (x^2 + \frac {\xi^2}{\h^2}\right )}  \, . 
\ee
Now, let $\varphi_\alpha (x)= \sqrt [4]{\frac {\alpha^2}{\pi}} e^{-\alpha^2 x^2 /2} = \sqrt {\alpha} \varphi (\alpha x)$. \ Then  
\be
W^\h_\alpha (x,\xi ) &=& [\W^\h  \left ( \varphi_\alpha \right )] (x,\xi ) =  \left [\W^\h \left ( \sqrt {\alpha } T_\alpha \varphi \right ) \right ] (x,\xi )= 
\alpha \left [\W^\h \left ( T_\alpha \varphi \right ) \right ] (x,\xi )\\ &=& 
\left [\W^\h (\varphi ) \right ] \left (\alpha x,\frac {\xi}{\alpha} \right )= 
\frac {1}{\pi \h} e^{-\alpha^2 x^2 - \frac {\xi^2}{\alpha^2 \h^2} }\, .
\ee
In particular, if $\alpha =1/\sqrt {\hbar}$ then 
\be
\varphi (x) =  \frac {1}{\sqrt[4]{\hbar \pi}} e^{-x^2/2\hbar}
\ee
has Wigner transform 
\be
W^\hbar (x,\xi )= \frac {1}{\pi \hbar} e^{-\frac {x^2 + \xi^2}{ \hbar}}\, .
\ee

\subsubsection {Example 3.} \label {Es3} \ Let 
\be
\varphi (x) =\exp \left [ - \frac 12 \left ((a_1 +i a_2) x^2 + (b_1+ib_2)x + (c_1+ic_2) \right ) \right ]
\ee
be the Gaussian-type function considered by Hudson \cite {H}, where $a_1, a_2,b_1 , b_2 , c_1 , c_2 \in \R$ and 
$a_1 >0$. \ By means of a straightforward calculation it turns out that $W^\h (x,\xi ) = \left [ \W^\h (\varphi )\right ] (x,\xi )$ is still a real-valued and positive  Gaussian-type 
function given by
\be
W^\h (x, \xi )= \frac {\exp \left [ - \frac { 
4\h^2 (a_1^2+a_2^2) x^2 + 8 a_2 \h x \xi + 4 \h^2 (a_1 b_1 + a_2 b_2)x + b_2^2 \h^2 + 4 a_1 c_1 \h^2 + 4 b_2 \h \xi + 4 \xi^2 
}{4a_1 \h^2}  \right ]}{\h \sqrt {\pi a_1}} 
\ee

\subsubsection {Example 4.} \label {Es4} \ Let $\varphi_n (x) =  H_n (x)e^{-x^2/2}$ where $H_n (x)$ is the $n$-th Hermite polynomial; then it is not hard to see \cite {Folland} that its Wigner transform (for $\h =1$) is given by 
\be
W^1_n (x,\eta )&=& \left [ \W^1 (\varphi_n) \right ] (x,\xi ) = \frac {(-1)^n}{\pi} e^{-\left ( x^2 + \eta^2 \right )}L_n \left [ 2 (x^2 + \eta^2 ) \right ]
\ee
where $L_n$ is the $n$-th Laguerre polynomial. \ Then 
\be
W^\h_n (x,\xi ) = \frac {(-1)^n}{\pi \h} e^{-\left ( x^2 + \frac {\xi^2}{\h^2} \right ) }L_n \left [ 2 x^2 + \frac {2\xi^2}{\h^2}  \right ] 
\ee

\begin {remark} \label {Nota3}
We recall that $L_0 (z)= 1$ and $L_1 (z)=1-x$; thus the function defined as 
\be
w(x,\xi ) &=& C_0 W^\h_0 (x,\xi )+ C_1 W^\h_1 (x,\xi) \\
&=& \frac {1}{\h \pi}e^{- \left ( x^2+\frac {\xi^2}{\h^2} \right )} \left [C_0 - C_1 \left ( 1- \left ( 2 {x^2} + \frac {2\xi^2}{\h^2} \right )  
\right ) \right ] \\
&=& \frac {C_0}{2\h \pi}e^{- \left ( x^2+\frac {\xi^2}{\h^2} \right )} \left [1 + 2{x^2} + \frac {2\xi^2}{\h^2} \right ]\, , 
\ee
where $C_1 =  \frac 12 C_0 >0$, is such that $w(x,\xi )>0$ everywhere. \ If $w \in \D$ then, by the Hudson's argument, $w(x,\xi )$ must be 
of the form discussed in the Example \ref {Es3}. \ Because this is not the case then $w\notin \D$ and thus we can conclude that the set $\D$ is not closed to 
linear combinations.
\end {remark}

\subsubsection {Example 5.} \label {Es5} \  {\bf Wavefunction of the free Schr\"odinger equation.} \ Let us consider the free linear Schr\"odinger equation 
$i \hbar\frac {\partial \psi}{\partial t}  = - {\hbar^2} \frac {\partial^2 \psi}{\partial x^2} $ with normalized initial condition 
\be
\psi_0 (x;\hbar ) = \sqrt [4]{\frac {2}{\pi}} e^{-x^2}\, ;
\ee
then it is well known that \cite {T}
\be
\psi (x,t; \hbar ) &=& \left [ U(t) \psi_0 \right ] (x;\hbar ) = \int_{\R} K_0 (x-y,t) \psi_0 (y;\hbar ) d y \\ 
&=& \sqrt [4] {\frac {2}{\pi}} \sqrt {\frac {i}{i-4\hbar t}} e^{-i \frac {x^2}{i-4\hbar t}} 
= \sqrt [4] {\frac {2}{\pi}} \sqrt {\frac {1-4i\hbar t }{16 \hbar^2 t^2 +1}} e^{-\frac {x^2(1-4i\hbar t )}{1+16\hbar^2 t^2}}
\ee
where the kernel $K_0$ is given by
\be
K_0 (z,t)= \sqrt {\frac {1}{4\pi i \hbar t}}e^{i \frac {z^2}{4\hbar t}} \, . 
\ee
A straightforward calculus gives that the Wigner transform $W^\hbar = \W^\hbar (\psi )$ of the wave-function $\psi (x,t;\hbar )$ is given by
\be
W^\hbar (x,\xi ,t )
= \frac {1}{\pi \hbar} e^{ -\frac {\xi^2}{2\hbar^2}} e^{-2(x-2\xi t)^2}\, . 
\ee

\subsubsection {Example 6.} \label {Es6} \ {\bf Stationary solution for a singular potential.} \ Let us consider the equation  
\bee
i \hbar \frac {\partial \psi}{\partial t}  = - \hbar^2\frac {\partial^2 \psi}{\partial x^2}   + \gamma \delta \psi \label {Eq17}
\eee 
where $\delta$ is the Dirac's delta distribution supported at $x=0$. \ We look for a stationary solution of the form 
$ \psi (x,t;\hbar )=e^{-iE t/\hbar } \varphi (x;\hbar )$ for some $E \in \R$. \ It is well known \cite {AGHH} that exactly one stationary solution 
occurs only if $\gamma <0$ and it is given by
\be
E = - \frac {\gamma^2}{4 \hbar^2} \ \mbox { and } \ \varphi (x)= \sqrt {\kappa}e^{-\kappa |x|}\, , \ \kappa = \frac {\sqrt {|E|}}{\hbar} = \frac {|\gamma|}{2\hbar^2}
\ee
By means of a straightforward calculation it turns out that the Wigner's transform is an even function with respect $x$ and $\xi$ both, 
$W^\h (-x , \xi )=W^\h (x,-\xi )=W^\h (x,\xi)$, given by
\bee
W^\h (x,\xi )
= \frac {\h \kappa^2}{\xi \pi({\kappa^2 \h^2 + 4 \xi^2})} e^{-2 \kappa |x|} \left [ \cos \left ( \frac {2|x|\xi}{\h}\right ) \xi + \kappa \h \sin 
\left ( \frac {2|x|\xi}{\h}\right ) \right ]
\, . \label {Eq18}
\eee

\subsubsection {Example 7.} \label {Es7} \ {\bf Stationary soliton.} \ Let us consider the Gross-Pitaevskii equation  
\be
i \hbar\frac {\partial \psi}{\partial t}   = - \hbar^2\frac {\partial^2 \psi}{\partial x^2}   + \nu |\psi |^2 \psi \, , \ \nu \in \R \, , 
\ee 
where we look for a stationary solution of the form $ \psi (x,t;\hbar )=e^{-iE t/\hbar } \varphi (x;\hbar )$ for some $E \in \R$ and a real-valued function 
$\varphi$. \ Then $\varphi (x;\hbar )$ is a normalized 
solution to the equation $- \hbar^2 \varphi'' + \nu \varphi^3 = E \varphi$, that is it is given by $\varphi (x)=A\mbox {sech} (Bx)$ where \cite {Davis} 
\be
A= \frac {\sqrt {-\nu }}{\sqrt {8}\hbar} \, , \ B= - \frac {\nu }{4\hbar^2} \, \ \mbox { and } \ E = - \frac {\nu^2}{16\hbar^2 }
\ee
provided that $\nu <0$. \ In conclusion:
\be
\psi (x,t;\hbar )= \frac {\sqrt {-\nu }}{\sqrt {8} \hbar} \mbox {sech} \left ( {\frac {-\nu}{4\hbar^2}} x \right )e^{i \nu^2 t/16 \hbar^3} \, . 
\ee
Its Wigner transform does not depend on time and it is given by 
\bee 
W^\hbar (x,\xi ;\hbar ) 
= \frac {1}{\hbar} \frac {\sin \left (\frac {2 x \xi }{\hbar} \right )} {\sinh \left ( \frac {\nu x}{2\hbar^2} \right ) 
\sinh \left (  \frac {4 \pi \xi \hbar}{\nu} \right )} \label {Eq19}
\eee
by means of a straightforward calculation \cite {KL} (see Appendix \ref {App1} for details).

\section {Schr\"odinger equation in the Wigner representation} \label {Sez3}

Here we deal with the one-dimensional semiclassical linear Schr\"odinger equation 
\bee
\left \{
\begin {array}{l}
i \hbar \frac {\partial \psi}{\partial t} = H\psi   \\ 
\psi (x,0;\hbar )=\psi_0 (x;\hbar ) 
\end {array}
\right. \ \mbox { where } \ H := - \frac {\hbar^2}{2m} \frac {\partial^2 }{\partial x^2} + V(x,t) \label {Eq20}
\eee
where $\hbar \ll 1$ is a semiclassical parameter; for argument's sake we choose the units such that $2m=1$. \ Equation (\ref {Eq20}) is a one-dimensional 
linear Schr\"odinger equation and, under some suitable assumptions on the real-valued potential $V(x,t)$, it has a global solution $\psi (x,t;\hbar )$ and the conservation 
of the norm $\| \psi (\cdot , t ;\hbar ) \|_{L^2} = \| \psi_0 (\cdot ;\hbar ) \|_{L^2}$ holds true.

In order to consider a different approach we make use of the (semiclassical) Wigner 
transform $W^\hbar (x,\xi , t ;\hbar )$ of $\psi (x,t;\hbar )$ defined as:
\be
W^\hbar (x,\xi ,t ;\hbar )= \frac {1}{2\pi} \int_{\R} e^{-iy \xi} \psi \left ( x + \frac 12 \hbar y , t ;\hbar \right ) 
\overline {\psi  \left ( x - \frac 12 \hbar y , t ;\hbar \right )} dy \, .
\ee
Hereafter, we simply denote
\be
W^\hbar (x,\xi , t) := W^\hbar (x,\xi , t;\hbar )\, . 
\ee

In the Wigner's representation the Schr\"odinger equation (\ref {Eq20}) takes the following form.

\begin {theorem} [Time-dependent Schr\"odinger equation in the Wigner representation] \label {Teo_3} 
Assume that $V(x,t)$ is a smooth real-valued function. \ The Wigner transform $W^\hbar (x,\xi , t)$ satisfies to the following formal equation
\bee
\left \{
\begin {array}{l}
\frac {\partial W^\hbar}{\partial t} = -2 \xi \frac {\partial W^\hbar}{\partial x} + \frac {\partial V}{\partial x} \frac {\partial W^\hbar}{\partial \xi} + 
\sum_{m =1}^\infty \frac {(-1)^m\hbar^{2m} }{2^{2m} (2m+1) !} \frac {\partial^{2m+1} V}{\partial x^{2m+1}} \frac {\partial^{2m+1}W^\hbar }{\partial \xi^{2m+1}}  \\
W^\hbar (x,\xi , 0)= W^\hbar_0 (x,\xi ) = \left [ \W^\hbar (\psi_0 ) \right ] (x, \xi )
\end {array}
\right. \, . 
\label {Eq21}
\eee
\end {theorem}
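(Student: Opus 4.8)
The plan is to differentiate the Wigner transform directly under the integral sign, use the Schr\"odinger equation to eliminate $\partial_t \psi$, and then re‑express everything in terms of $x$‑ and $\xi$‑derivatives of $W^\hbar$. As in the proof of Lemma \ref{Lem_1} I would write
$W^\hbar (x,\xi ,t) = \mathcal{F}_y [\nu_\hbar ](\xi ; x, t)$, where $\nu_\hbar (y;x,t) = \psi (x+\frac12 \hbar y, t)\,\overline{\psi (x-\frac12 \hbar y, t)}$ and $\mathcal{F}_y$ is the Fourier transform in $y$ with the normalisation $\frac{1}{2\pi}\int_\R e^{-iy\xi}\cdot\, dy$. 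Differentiating in $t$ and inserting $i\hbar\,\partial_t \psi = -\hbar^2 \psi_{xx} + V\psi$ (recall $2m=1$) together with its complex conjugate, one gets $\partial_t \nu_\hbar$ as the sum of a ``kinetic'' contribution proportional to $\psi_{xx}(x+\frac12 \hbar y)\,\overline{\psi (x-\frac12 \hbar y)} - \psi (x+\frac12 \hbar y)\,\overline{\psi_{xx}(x-\frac12 \hbar y)}$ and a ``potential'' contribution proportional to $\left[ V(x+\frac12 \hbar y, t) - V(x-\frac12 \hbar y,t) \right]\nu_\hbar$.

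For the kinetic term the key elementary identity is that, setting $A = \psi (x+\frac12 \hbar y)$ and $B = \overline{\psi (x-\frac12 \hbar y)}$, one has $A_{xx}B - AB_{xx} = \frac{2}{\hbar}\,\partial_x \partial_y (AB) = \frac{2}{\hbar}\,\partial_x \partial_y \nu_\hbar$; applying $\mathcal{F}_y$ and using $\mathcal{F}_y [\partial_y \nu_\hbar] = i\xi\, W^\hbar$ (an integration by parts) turns this contribution into the transport term $-2\xi\,\partial_x W^\hbar$. For the potential term I would Taylor‑expand $V(x\pm\frac12 \hbar y, t)$ about $x$; the even powers of $y$ cancel in the difference, leaving
$V(x+\frac12 \hbar y,t) - V(x-\frac12 \hbar y,t) = 2\sum_{k\ge 0}\frac{1}{(2k+1)!}\left(\frac{\hbar y}{2}\right)^{2k+1}\partial_x^{2k+1}V$. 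Then I would use $\mathcal{F}_y [y^{2k+1}\nu_\hbar] = i(-1)^k \partial_\xi^{2k+1} W^\hbar$ to trade each factor $y^{2k+1}$ for $\xi$‑derivatives of $W^\hbar$. After collecting the overall prefactor $-\frac{i}{\hbar}$ together with the $i(-1)^k$ coming from the Fourier transform and the powers of $2$, the term $k=0$ gives $\partial_x V\,\partial_\xi W^\hbar$ and the terms $k=m\ge 1$ give precisely the stated remainder series $\sum_{m\ge 1}\frac{(-1)^m \hbar^{2m}}{2^{2m}(2m+1)!}\,\partial_x^{2m+1}V\,\partial_\xi^{2m+1}W^\hbar$. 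Adding the kinetic and potential contributions yields (\ref{Eq21}), and the initial condition is just Definition \ref{Def_1} applied to $\psi_0$.

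The only genuinely delicate point is the bookkeeping of the factors of $i$, the signs $(-1)^k$, and the powers of $\hbar$ and $2$; since the statement is a \emph{formal} identity I would not attempt to justify the interchange of the infinite Taylor sum with the integral, which would require growth hypotheses on $V$ and decay of $\psi$ and is exactly the obstruction discussed in Section \ref{Sez1}. For emphasis I would close by noting that when $V$ is a polynomial of degree at most two in $x$ the series terminates after $k=0$, so ${\mathcal R}(W^\hbar)\equiv 0$ and (\ref{Eq21}) reduces rigorously to the classical Liouville equation — the situation exploited in Section \ref{Sez4}.
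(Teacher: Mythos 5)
Your proposal is correct and follows essentially the same route as the paper's own proof: differentiate the Wigner integral in $t$, insert the Schr\"odinger equation and its conjugate, identify the kinetic contribution with $-2\xi\,\partial_x W^\hbar$ (your identity $A_{xx}B-AB_{xx}=\frac{2}{\hbar}\partial_x\partial_y(AB)$ is just the explicit form of the paper's ``straightforward calculation'' for $F_1$), and Taylor-expand $V(x\pm\frac12\hbar y,t)$ trading $y^{2m+1}$ for $i(-1)^m\partial_\xi^{2m+1}$ exactly as in the paper's treatment of $F_2$. The sign, $i$ and $\hbar/2$ bookkeeping in your sketch reproduces the stated coefficients, so nothing is missing.
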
 

\begin {remark} \label {Nota4}
In the semiclassical limit where $\hbar \to 0$ the dominant term of equation (\ref {Eq21}) is given by
\bee
\frac {\partial W^\hbar}{\partial t} = -2 \xi \frac {\partial W^\hbar}{\partial x} + \frac {\partial V}{\partial x} \frac {\partial W^\hbar}{\partial \xi}\label {Eq22}
\eee
and it coincides with the classical Liouville equation. \ If the potential $V(x)$ {\bf is independent of $t$} let $h (q,p) = p^2 + V(q)$ be the classical Hamiltonian 
(where ${2m}=1$) and let 
\bee
\left \{ 
\begin {array}{lcl}
\dot p &=& \frac {\partial h}{\partial q}\\
\dot q &=&- \frac {\partial h}{\partial p}
\end {array}
\right. \ \mbox { with initial condition } 
\left \{
\begin {array}{lcl}
 p(0)&=& \xi \\
 q(0) &=& x
\end {array}
\right. \, . \label {Eq23}
\eee
Then, the solution to (\ref {Eq22}) is given by
\be
W(x,\xi ,t) =W_0^\hbar \left [ q(t),p(t) \right ] = W_0^\hbar \left [ S^t (x,\xi ) \right ]
\ee
where $\left ( q(t),p(t) \right )= S^t (x,\xi )$ is the Hamiltonian flux associated to (\ref {Eq23}).
\end {remark}

\begin {theorem} [Time-independent Schr\"odinger equation in the Wigner representation] \label {Teo_4} Let 
$\psi (x,t;\hbar )= \varphi (x;\hbar )e^{-iEt/\hbar}$, where $\varphi \in L^2 (\R , dx)$, be a stationary solution to the time-dependent Schr\"odinger equation
\be
E \varphi = H \varphi \, , \ \mbox { where } \ H := - \frac {\hbar^2}{2m} \frac {\partial^2 }{\partial x^2} + V(x)\, , 
\ee
and where $V(x)$ is a smooth real-valued function independent of $t$. \ Then the Wigner transform $W^\hbar (x,\xi ) = \left [ \W^\hbar ( \varphi ) \right ] (x,\xi )$ is 
independent of $t$ and it satisfies to the following formal equation
\bee
E W^\hbar = -  \frac {\hbar^2}{4}  \frac {\partial^2 W^\hbar}{\partial x^2}  + \left [ \xi^2 + V \right ] W^\hbar  + 
\sum_{m =1}^\infty \frac {(-1)^m \hbar^{2m}}
{2^{2m}(2m) !}  \frac {d^{2m} V}{d x^{2m}}  \frac {\partial^{2m}W^\hbar }{\partial \xi^{2m}} \label {Eq24}
\eee
under the constrain
\bee
0 = -2 \xi \frac {\partial W^\hbar}{\partial x} + \frac {d V}{d x} \frac {\partial W^\hbar}{\partial \xi} + 
\sum_{m =1}^\infty \frac {(-1)^m\hbar^{2m} }{2^{2m} (2m+1) !} \frac {d^{2m+1} V}{dx^{2m+1}} \frac {\partial^{2m+1}W^\hbar }{\partial \xi^{2m+1}}  \, . 
\label {Eq25}
\eee
\end {theorem}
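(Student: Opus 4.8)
The plan is to reduce everything to the bilinear form $\widetilde{\W^\hbar}(\cdot,\cdot)$ together with the eigenvalue identity $H\varphi=E\varphi$. First, since $\psi(x,t;\hbar)=e^{-iEt/\hbar}\varphi(x;\hbar)$ and the Wigner transform is invariant under multiplication of the wave-function by a constant of modulus one (property i.\ above, or directly $\W^\hbar(\lambda\varphi)=|\lambda|^2\W^\hbar(\varphi)$), the function $W^\hbar(x,\xi)=\left[\W^\hbar(\psi(\cdot,t))\right](x,\xi)$ does not depend on $t$. Hence $\partial_t W^\hbar=0$, and, because $V$ is independent of $t$, equation (\ref{Eq21}) of Theorem \ref{Teo_3} collapses at once to the constraint (\ref{Eq25}). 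It then remains to establish (\ref{Eq24}).

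For (\ref{Eq24}) I would compute the \emph{symmetric} combination $\widetilde{\W^\hbar}(\varphi,H\varphi)+\widetilde{\W^\hbar}(H\varphi,\varphi)$ in two ways. On the one hand, since $H\varphi=E\varphi$ with $E\in\R$, bilinearity (linear in the first slot, conjugate-linear in the second) gives $\widetilde{\W^\hbar}(\varphi,H\varphi)=\overline{E}\,W^\hbar=E W^\hbar$ and $\widetilde{\W^\hbar}(H\varphi,\varphi)=E W^\hbar$, so the sum equals $2E W^\hbar$. On the other hand I would evaluate the same sum directly from the defining integral, splitting $H=-\frac{\hbar^2}{2m}\frac{\partial^2}{\partial x^2}+V$ and, as agreed in Section \ref{Sez3}, taking $2m=1$. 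For the potential part one inserts the Taylor expansion of $V$ about $x$ into $\widetilde{\W^\hbar}(\varphi,V\varphi)$ and $\widetilde{\W^\hbar}(V\varphi,\varphi)$ and uses $y^k e^{-i\xi y}=(i\partial_\xi)^k e^{-i\xi y}$; in the symmetric sum all odd-order terms cancel, while the order-$2m$ term contributes $2\,\frac{(-1)^m\hbar^{2m}}{2^{2m}(2m)!}\,\frac{d^{2m}V}{dx^{2m}}\,\frac{\partial^{2m}W^\hbar}{\partial\xi^{2m}}$, i.e.\ exactly $2VW^\hbar$ together with the series appearing in (\ref{Eq24}).

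For the kinetic part, writing $a=x+\frac12\hbar y$ and $b=x-\frac12\hbar y$, one rewrites the second derivatives of $\varphi$ through derivatives of the product $P:=\varphi(a)\overline{\varphi(b)}$: a short computation (comparing $\partial_x^2 P$ and $\partial_y^2 P$) yields $\varphi''(a)\overline{\varphi(b)}+\varphi(a)\overline{\varphi''(b)}=\frac12\partial_x^2 P+\frac{2}{\hbar^2}\partial_y^2 P$. Multiplying by $\frac{1}{2\pi}e^{-i\xi y}$ and integrating, the $\partial_x^2$ piece gives $\partial_x^2 W^\hbar$ and, after two integrations by parts in $y$, the $\partial_y^2$ piece gives $-\xi^2 W^\hbar$; hence the kinetic contribution to the symmetric sum is $-\frac{\hbar^2}{2}\partial_x^2 W^\hbar+2\xi^2 W^\hbar$. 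Adding the potential contribution, equating with $2EW^\hbar$ and dividing by $2$, one obtains precisely (\ref{Eq24}). (Equivalently, (\ref{Eq24}) and (\ref{Eq25}) are the real and imaginary parts of the Moyal eigenvalue relations $h\star W^\hbar=EW^\hbar=W^\hbar\star h$.)

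I expect the only genuinely delicate point to be the formal character of these manipulations — interchanging the infinite Taylor series of $V$ with the integral and with the $\xi$-differentiations, and differentiating under the integral sign in $x$ and $y$ — which is exactly why the statement is an identity between formal series rather than a classical PDE. The algebraic rewriting of $\varphi''(a)\overline{\varphi(b)}+\varphi(a)\overline{\varphi''(b)}$ in terms of $\partial_x$- and $\partial_y$-derivatives of $P$ is the one substantive computational step; everything else is bookkeeping of the powers of $\hbar$ and the factors $\frac12$, and the symmetry argument that kills the odd-order terms.
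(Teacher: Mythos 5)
Your proposal is correct and takes essentially the same route as the paper: your symmetric combination $\widetilde {\W^\hbar} (\varphi , H\varphi ) + \widetilde {\W^\hbar} (H\varphi , \varphi )$ is precisely the paper's ``sum'' identity (\ref {Eq32}) specialized to the stationary state, with your identity for $\varphi''(a)\overline {\varphi (b)}+\varphi (a)\overline {\varphi''(b)}$ reproducing the evaluation of $F_3$ in (\ref {Eq34}) and the even Taylor terms of $V$ reproducing $F_4$ in (\ref {Eq36}). Your derivation of the constraint (\ref {Eq25}) from the $t$-independence of $W^\hbar$ combined with Theorem \ref {Teo_3} also matches the paper's argument.
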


\begin {remark} \label {Nota5}
We may remark that the systems of equations (\ref {Eq21}) and (\ref {Eq24}-\ref  {Eq25}) are only formal because we don't make any statement about the 
convergence of the series. \ In fact, when the potential $V$ is a polynomial with finite degree with respect to $x$ then this problem does not occur. 
\end {remark}

\begin {remark} \label {Nota6}
One may, in principle, solve equations (\ref {Eq21}), or (\ref  {Eq24}-\ref {Eq25}),  by means of 
some approximation or numerical methods. \ This approach presents some problems: assume to find an approximate solution $w$ to 
(\ref {Eq21}) such that $\| W^\hbar - w \| \ll 1 $ with respect some norm, where $W^\hbar$ is the exact solution to (\ref {Eq21}). \ Some questions occur: 

\begin {itemize}

\item [{\it i}.] does $w \in \D$, that is the approximate solution $w$ is the Wigner transform of a pure state? \ This point has been discussed in Section \ref {Sez2_2}; 
but the criterion proposed by Narcowich and O'Connel, or by Tatarski, are quite hard to be implemented in explicit models.

\item [{\it ii}.] and in the affirmative case, what can we say about the difference $\| \psi - \phi \|_{L^2}$, where $\psi = \left [ \W^\hbar \right ]^{-1} ( W^\hbar )$ 
and $\phi = \left [ \W^\hbar \right ]^{-1} (w)$? \ In fact, we have proved in Theorem \ref {Teo_2} the continuity of the 
Wigner transform; but the continuity of the inverse is, as far as we know, an open problem: i.e., if $\varphi_i = \left [ {\mathcal W}^\hbar \right ]^{-1} (w_i) $ 
where $w_i \in \D$, $i=1,2$, we would require that
 \be
 \inf_{\theta } \| e^{i\theta }\varphi_1 - \varphi_2 \|_{L^p} \le C \| w_1 -w_2 \|_{L^q}^r
 \ee
 for some $p, q$ and $r$, and some positive constant $C$ depending on $\hbar$. \ However, some results hold true; for instance, let $a(x)$ be a bounded 
 classical observable, let $ <a>^j = \langle \varphi_j (\cdot ) ,a(\cdot ) \varphi_j (\cdot ) \rangle$, $j=1,2$, be the expectation value of the classical 
 observable $a(x)$ on the quantum state described by the wave-function $\varphi_j$. \ Then 
 \be
 | <a>^1 - <a>^2 | &=& \left | \int_{\R} a(x) \left [ | \varphi_1 (x) |^2 - | \varphi_2 (x) |^2 \right ] dx \right | \\ 
 &=& \left | \iint_{\R^2} a(x) \left [ w_1 (x,\xi) - w_2 (x,\xi ) \right ] d\xi dx \right | \\
 &\le & \| a\|_{L^\infty (\R , dx)} \| w_1 - w_2 \|_{L^1 (\R \times \R , dx d\xi )}
\ee 

\end {itemize}

\end {remark}

\subsection {Proofs of Theorems \ref {Teo_3} and \ref {Teo_4}} \label {Sez3_1}

Let us denote 
\bee
\psi_\pm := \psi \left ( x \pm \frac 12 \hbar y,t ; \hbar \right ) \ \mbox { and } \ V_\pm = V \left ( x \pm \frac 12 \hbar y  , t \right ).\label {Eq26} 
\eee
Then, from the Schr\"odinger equation (\ref {Eq20}) it follows that
\bee
i \hbar \dot \psi_+ = - \hbar^2 \psi_+ '' + V_+ \psi_+  \label {Eq27}
\eee
and 
\bee
-i \hbar \overline {\dot \psi_-} = - \hbar^2 \overline {\psi_- ''} + V_- \overline {\psi_-}  \label {Eq28}
\eee
since the potential $V$ is assumed to be a real-valued function. \ From these two equations it follows that
\bee
i \hbar \int_{\R} \dot \psi_+ \overline {\psi_-} e^{-i y \xi } dy = - \hbar^2  \int_{\R} \psi_+'' \overline  {\psi_-} e^{-i y \xi } dy +  
\int_{\R} \psi_+ \overline {\psi_-} V_+ e^{-i y \xi } dy  \label {Eq29}
\eee
and 
\bee
- i \hbar \int_{\R}  \psi_+ \overline {\dot \psi_-} e^{-i y \xi } dy = - \hbar^2  \int_{\R} \psi_+ \overline {\psi_-''} e^{-i y \xi } dy +  
\int_{\R} \psi_+ \overline { \psi_-} V_- e^{-i y \xi } dy  \label {Eq30}
\eee
If we take the difference between (\ref {Eq29}) and (\ref {Eq30}), and then the sum, it follows that
\bee
2\pi i \hbar \frac {\partial W^\hbar}{\partial t} = i \hbar \int_{\R} \left ( \dot \psi_+ \overline {\psi_-} + \psi_+ \overline {\dot \psi_-} \right ) e^{-i y \xi } dy
= - \hbar^2 F_1 + F_2  \label {Eq31}
\eee
and 
\bee
 i \hbar \int_{\R} \left ( \dot \psi_+ \overline {\psi_-} - \psi_+ \overline {\dot \psi_-} \right ) e^{-i y \xi } dy = - \hbar^2 F_3 + F_4  \label {Eq32}
\eee
where
\be
F_1 &=&   \int_{\R} \left ( \psi_+'' \overline {\psi_-} - \psi_+ \overline {\psi_-''} \right ) e^{-i y \xi } dy \\
F_2 &=&  \int_{\R} \left ( V_+ - V_- \right ) \psi_+ \overline {\psi_-} e^{-i y \xi } dy \\
F_3 &=&   \int_{\R} \left ( \psi_+'' \overline {\psi_-} + \psi_+ \overline {\psi_-''} \right ) e^{-i y \xi } dy \\
F_4 &=&  \int_{\R} \left ( V_+ + V_- \right ) \psi_+ \overline {\psi_-} e^{-i y \xi } dy \\
\ee

By means of straightforward calculations on can check that 
\bee
F_1 = \frac {4\pi i \xi}{\hbar} \frac {\partial W^\hbar}{\partial x}  \label {Eq33}
\eee
and that 
\bee
F_3 =  \pi \frac {\partial^2 W^\hbar}{\partial x^2}  - \frac {4\pi }{\hbar^2}\xi^2 W^\hbar \, .  \label {Eq34}
\eee

Concerning the other two terms $F_2 $ and $F_4$ we have that 
\bee
F_2 &=& 4\pi \sum_{m =0}^\infty \frac {1}{(2m+1) !} \frac {\partial^{2m+1} V}{\partial x^{2m+1}} \left ( 
\frac i2 \hbar \right )^{2m+1} \frac {\partial^{2m+1}W^\hbar }{\partial \xi^{2m+1}} \label {Eq35} \\ 
F_4 &=& 4\pi \sum_{m =0}^\infty \frac {1}{(2m) !} \frac {\partial^{2m} V}{\partial x^{2m}} \left ( 
\frac i2 \hbar \right )^{2m} \frac {\partial^{2m}W^\hbar }{\partial \xi^{2m}} \label {Eq36}
\eee
from the formal power series expansion  
\be
V_{\pm } = V \left (x \pm \frac 12 \hbar y ,t \right ) = \sum_{\ell =0}^\infty \frac {1}{\ell !} \frac {\partial^{\ell} V(x,t)}{\partial x^{\ell}} 
\left ( \pm \frac 12 \hbar y \right )^\ell \, . 
\ee
For instance
\be
F_2 &=&  \int_{\R} \left ( V_+ - V_- \right ) \psi_+ \overline {\psi_-} e^{-i y \xi } dy \\
&=& 2 \sum_{m =0}^\infty \frac {1}{(2m+1) !} \frac {\partial^{2m+1} V(x,t)}{\partial x^{2m+1}} \left ( 
\frac 12 \hbar \right )^{2m+1} \int_{\R} y^{2m+1} \psi_+ \overline {\psi_-} e^{-i y \xi } dy \\ 
&=& 2 \sum_{m =0}^\infty \frac {1}{(2m+1) !} \frac {\partial^{2m+1} V(x,t)}{\partial x^{2m+1}} \left ( 
\frac i2 \hbar \right )^{2m+1} \frac {\partial^{2m+1}}{\partial \xi^{2m+1}}\int_{\R}  \psi_+ \overline {\psi_-} e^{-i y \xi } dy 
\\ 
&=& 4\pi \sum_{m =0}^\infty \frac {1}{(2m+1) !} \frac {\partial^{2m+1} V(x,t)}{\partial x^{2m+1}} \left ( 
\frac i2 \hbar \right )^{2m+1} \frac {\partial^{2m+1}W^\hbar }{\partial \xi^{2m+1}} \, . 
\ee

Thus, from (\ref {Eq31}), (\ref {Eq33}) and (\ref {Eq35}) Theorem \ref {Teo_3} follows. \ Similarly, by noticing that 
\be
i \hbar \int_{\R} \left ( \dot \psi_+ \overline {\psi_-} - \psi_+ \overline {\dot \psi_-} \right ) e^{-i y \xi } dy  = 
E \int_{\R} \left ( \psi_+ \overline {\psi_-} + \psi_+ \overline {\psi_-} \right ) e^{-i y \xi } dy = 
4\pi E W^\hbar
\ee
when $\psi (x,t;\hbar )= e^{-iEt/\hbar} \varphi (x;\hbar )$, then $W^\hbar$ is independent of $t$ and from (\ref {Eq32}), (\ref {Eq34}) and (\ref {Eq36}) 
Theorem \ref {Teo_4} follows.

The proofs are thus completed. 

\subsection {Schr\"odinger equations with singular potential or with a nonlinear potential in the Wigner representation} \label {Sez3_2}

Here we consider the cases where the potential $V$ is a singular function, namely a Dirac's delta potential, or where a nonlinear potential occurs, namely 
we consider the Gross-Pitaevskii equation. \ In both cases one can write a formal equation to the Wigner representation of the wave-function. 

\subsubsection {Schr\"odinger equation with a Dirac's delta potential in the Wigner representation} \label {Sez3_2_1}

We premise the following result.

\begin {lemma} \label {lem_7}
Let $V(x) =\gamma \delta_{x_0} (x)$ be a Dirac's delta distribution supported at the point $x={x_0}$. \ Then 
\bee
F_2 &=&  
\frac {4 \gamma i}{\hbar} \int_{\R} W^\hbar \left ( x , \xi ' \right ) \sin \left [ \frac {2({x_0}-x) (\xi' -\xi)}{\hbar}\right ] d \xi' \label {Eq37} \\ 
F_4 &=& \frac {4 \gamma}{\hbar} \int_{\R} W^\hbar \left ( x , \xi ' \right ) \cos \left [ \frac {2({x_0}-x) (\xi' -\xi)}{\hbar}\right ] d \xi'  
\label {Eq38}
\eee
\end {lemma}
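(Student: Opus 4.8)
The plan is to combine the scaling property of the Dirac delta with the Fourier inversion of the Wigner transform already used in the proof of Lemma \ref{Lem_1}. Exactly as with the formal series of Theorems \ref{Teo_3} and \ref{Teo_4}, the computation is formal: a delta potential does not act on $L^2$ by multiplication and the pointwise product $\psi_+\overline{\psi_-}$ of (\ref{Eq26}) is not literally defined for a general $\psi\in L^2$, so the identities must be read distributionally.

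First I would rewrite the factors $V_\pm=\gamma\,\delta_{x_0}\!\left(x\pm\frac12\hbar y\right)$, which appear in the expressions
$$F_2=\int_{\R}(V_+-V_-)\,\psi_+\overline{\psi_-}\,e^{-iy\xi}\,dy,\qquad F_4=\int_{\R}(V_++V_-)\,\psi_+\overline{\psi_-}\,e^{-iy\xi}\,dy$$
introduced in Subsection \ref{Sez3_1}, as delta distributions in the integration variable $y$. Using $\delta(ay)=|a|^{-1}\delta(y)$ with $a=\pm\frac12\hbar$ and setting $a_0:=\frac{2(x_0-x)}{\hbar}$, one gets $V_+=\frac{2\gamma}{\hbar}\,\delta(y-a_0)$ and $V_-=\frac{2\gamma}{\hbar}\,\delta(y+a_0)$. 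Integrating against these two shifted deltas then yields
$$F_2=\frac{2\gamma}{\hbar}\Big[\big(\psi_+\overline{\psi_-}\big)\big|_{y=a_0}\,e^{-ia_0\xi}-\big(\psi_+\overline{\psi_-}\big)\big|_{y=-a_0}\,e^{ia_0\xi}\Big]$$
together with the analogous formula for $F_4$ in which the minus sign between the two terms is replaced by a plus sign.

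Next I would eliminate the products $\psi_+\overline{\psi_-}$ in favour of $W^\hbar$. From $W^\hbar(x,\xi)=\frac1{2\pi}\int_{\R}e^{-iy\xi}\psi_+\overline{\psi_-}\,dy$, Fourier inversion (exactly as in the proof of Lemma \ref{Lem_1}) gives $\psi_+\overline{\psi_-}=\int_{\R}e^{iy\xi'}W^\hbar(x,\xi')\,d\xi'$, whence
$$\big(\psi_+\overline{\psi_-}\big)\big|_{y=\pm a_0}\,e^{\mp ia_0\xi}=\int_{\R}W^\hbar(x,\xi')\,e^{\pm ia_0(\xi'-\xi)}\,d\xi'.$$
Substituting this into the expressions for $F_2$ and $F_4$ and combining the exponentials through $e^{i\theta}-e^{-i\theta}=2i\sin\theta$ and $e^{i\theta}+e^{-i\theta}=2\cos\theta$ with $\theta=a_0(\xi'-\xi)=\frac{2(x_0-x)(\xi'-\xi)}{\hbar}$ produces precisely (\ref{Eq37}) and (\ref{Eq38}).

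The only genuine difficulty is conceptual rather than computational: one has to make sense of evaluating a distributional potential on a product of two copies of the wave-function, which is exactly why the result, like (\ref{Eq21}) and (\ref{Eq24}--\ref{Eq25}), is labelled formal. On the explicit stationary solution of Example \ref{Es6} every step can be checked directly, and once the formal framework is granted the argument reduces to the delta-scaling identity followed by one Fourier inversion, with no estimates needed.
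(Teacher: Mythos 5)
Your proposal is correct and follows essentially the same route as the paper: the paper likewise integrates the delta in the $y$ variable (obtaining $\frac{4\gamma i}{\hbar}\Im$ and $\frac{4\gamma}{\hbar}\Re$ of $\psi(x_0)\overline{\psi(2x-x_0)}e^{-i2\xi(x_0-x)/\hbar}$, which is just your two-conjugate-term expression written compactly) and then substitutes the inversion identity $\psi(x_0)\overline{\psi(2x-x_0)}=\int_{\R}W^\hbar(x,\xi')e^{i2(x_0-x)\xi'/\hbar}\,d\xi'$, quoted from Lemma \ref{Lem_4}, which is exactly your Fourier-inversion step. The only cosmetic differences are your explicit use of the delta-scaling identity and Euler's formulas in place of the paper's $\Im/\Re$ notation.
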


\begin {proof}
Indeed (let us denote $\psi (x,t;\hbar)$ by $\psi (x)$ for sake of simplicity), 
\be
F_2 &=&  \int_{\R} \left ( V_+ - V_- \right ) \psi_+ \overline {\psi_-} e^{-i y \xi } dy \\
&=& \gamma \int_{\R} \left [ \delta_{x_0} \left ( x + \frac 12 \hbar y \right ) - \delta_{x_0} \left ( x - \frac 12 \hbar y \right ) \right ] \psi \left ( x + \frac 12 \hbar y \right ) \overline {\psi \left ( x - \frac 12 \hbar y \right )}e^{-i y \xi} dy \\ 
&=& \frac {4 \gamma i}{\hbar} \Im \left [ \psi \left ( {x_0} \right ) \overline {\psi \left ( 2x - {x_0} \right )}e^{-i2 \xi ({x_0}-x)/\hbar} \right ]
\ee
and similarly
\be
F_4 
= \frac {4\gamma}{\hbar}  \Re \left [ \psi \left ( {x_0} \right ) \overline {\psi \left ( 2x - {x_0} \right )}e^{-i2 \xi ({x_0}-x)/\hbar} \right ]
\, . 
\ee

Now, recalling Lemma \ref {Lem_4} then 
\be
\psi ({x_0}) \overline {\psi (2x-{x_0})} = \int_{\R} W^\hbar \left ( x , \xi ' \right ) e^{i2({x_0}-x) \xi' /\hbar} d \xi' \, .
\ee

Therefore we can conclude that
\be
F_2 =\frac {4 \gamma i}{\hbar} \Im \left \{ \int_{\R} W^\hbar \left ( x , \xi ' \right ) e^{i2({x_0}-x) (\xi' -\xi)/\hbar} d \xi' \right \} = 
\frac {4 \gamma i}{\hbar} \int_{\R} W^\hbar \left ( x , \xi ' \right ) \sin \left [ \frac {2({x_0}-x) (\xi' -\xi)}{\hbar}\right ] d \xi'
\ee
and 
\be
F_4 = \frac {4 \gamma }{\hbar} \Re \left \{ \int_{\R} W^\hbar \left ( x , \xi ' \right ) e^{i2({x_0}-x) (\xi' -\xi)/\hbar} d \xi' \right \}
= \frac {4 \gamma }{\hbar} \int_{\R} W^\hbar \left ( x , \xi ' \right ) \cos \left [ \frac {2({x_0}-x) (\xi' -\xi)}{\hbar}\right ] d \xi' \, . 
\ee
\end {proof}

Thus the time-dependent Schr\"odinger 
\be
i \hbar \frac {\partial \psi}{\partial t} = - \hbar^2 \frac {\partial^2 \psi}{\partial x^2} + \gamma \delta_{x_0} \psi 
\ee
takes the form of integro-differential equation
\bee
\frac {\partial W^\hbar}{\partial t} = - 2 \xi \frac {\partial W^\hbar}{\partial x} + \frac {2\gamma}{\pi \hbar^2} \int_{\R} 
 W^\hbar (x,\xi') \sin \left [ \frac {2({x_0}-x)(\xi'-\xi )}{\hbar}\right ] d\xi'\label {Eq39}
\eee
and the time-independent Schr\"odinger takes the form of integro-differential equation
\bee
E W^\hbar = -  \frac {\hbar^2}{4}  \frac {\partial^2 W^\hbar}{\partial x^2}  + \xi^2 W^\hbar   
+ \frac {\gamma }{\hbar \pi } \int_{\R} W^\hbar (x,\xi') \cos \left [ \frac {2({x_0}-x)(\xi'-\xi )}{\hbar}\right ] d\xi' \label {Eq40}
\eee
with the constrain
\bee
0 = -2 \xi \frac {\partial W^\hbar}{\partial x} + \frac {2\gamma}{\pi \hbar^2} \int_{\R} 
W^\hbar (x,\xi') \sin \left [ \frac {2({x_0}-x)(\xi'-\xi )}{\hbar}\right ] d\xi'    \label {Eq41}
\eee

\begin {remark} \label {Nota7}
One can easily check that (\ref {Eq18}) satisfies (\ref {Eq40}-\ref {Eq41}).
\end {remark}

\subsubsection {Gross-Pitaevskii equation in the Wigner representation} \label {Sez3_2_2} The Gross-Pitaevskii equation has the form
\bee
\left \{
\begin {array}{l}
i \hbar \frac {\partial \psi}{\partial t} = H\psi + \nu |\psi|^2 \psi  \\ 
\psi (x,0;\hbar )=\psi_0 (x;\hbar ) 
\end {array}
\right. \ \mbox { where } \ H := - \frac {\hbar^2}{2m} \frac {\partial^2 }{\partial x^2} + V(x,t) \, , \ \nu \in \R \, . \label {Eq42}
\eee

In such a case the Wigner transform of equation (\ref {Eq42}) satisfies to the formal integro-differential equation 
\bee
 \frac {\partial W^\hbar}{\partial t} = -2 \xi \frac {\partial W^\hbar}{\partial x} +
\sum_{m=0}^{+\infty} \frac {(-1)^m \hbar^{2m} }{ 2^{2m}(2m+1)!} \frac {\partial^{2m+1} W^\hbar}{\partial \xi^{2m+1}}  \frac {\partial^{2m+1} }{\partial 
x^{2m+1} } \left [ V(x,t) + \nu    \int_{\R} W^\hbar (x,\xi', t  ) d\xi' \right ]  \label {Eq43}
\eee

By means of a straightforward (formal) calculation one obtains that (\ref  {Eq31}) becomes 
\be
2\pi \hbar i \frac {\partial W^\hbar}{\partial t} = -\hbar^2 F_1 + F_2 + F_5
\ee
where $F_1$ and $F_2$ are defined in (\ref {Eq33}) and (\ref {Eq35}). \ Concerning the term $F_5$ we make use of the shortened notation (\ref {Eq26}), thus 
\be
F_5 &=&  \nu \int_{\R} e^{-iy \xi} \left [ \left | \psi_+ \right |^2 \psi_+ \overline {\psi_- }- \psi_+ \left | \overline {\psi_- } \right |^2
\overline {\psi_- } \right ] dy \\
&=&   \nu  \int_{\R} e^{-iy \xi} Y_- (x,y,t;\hbar ) \psi_+ \overline {\psi_- } dy
\ee
where we set
\be
Y_\pm (x,y,t;\hbar ) &=&  \left | \psi \left (x + \frac 12 \hbar y ,t ; \hbar \right ) \right |^2 \pm 
\left | \overline {\psi \left (x - \frac 12 {\hbar} y ,t ;\hbar \right )} \right |^2  \\
&=& \int_{\R} \left [ W^\hbar \left (x+\frac 12 \hbar y ,\xi' ,t  \right )\pm W^\hbar \left (x-\frac 12 \hbar y ,\xi' ,t  \right ) \right ] d\xi'
\, .
\ee
Then, the formal power series expansion $ W^\hbar \left (x\pm \frac 12 \hbar y ,\xi' ,t  \right ) = \sum_{\ell=0}^\infty \frac {1}{\ell!} 
\frac {\partial^\ell W^\hbar}{\partial x^\ell} \left ( \pm \frac 12 \hbar y \right )^\ell$ yields to 
\be
Y_- (x,y,t;\hbar ) = \sum_{m=0}^{+\infty} \frac {\hbar^{2m+1} y^{2m+1}}{(2m+1)! 2^{2m}} \frac {\partial^{2m+1}}{\partial x^{2m+1}}\int_{\R} 
W^\hbar (x,\xi' ,t ) d\xi 
\ee
Therefore,
\be
&& F_5 =  \nu \int_{\R} e^{-iy \xi} 
 \sum_{m=0}^{+\infty} \frac {\hbar^{2m+1} y^{2m+1}}{(2m+1)!2^{2m}} \frac {\partial^{2m+1}}{\partial x^{2m+1}}\int_{\R} W^\hbar (x,\xi' ,t ) d \xi 
\psi_+ \overline {\psi_- } dy \\
&& =   \nu  \sum_{m=0}^{+\infty} \frac {\hbar^{2m+1} }{(2m+1)!2^{2m}} \left [ \frac {\partial^{2m+1}}{\partial x^{2m+1}} 
\int_{\R} W^\hbar (x,\xi' ,t )d\xi \right ] \int_{\R} y^{2m+1} e^{-iy \xi} \psi_+  
\overline { \psi_- }dy \\
&& = 2\pi \hbar i \nu \sum_{m=0}^{+\infty} \frac {(-1)^m \hbar^{2m} }{(2m+1)!2^{2m}} \left [ \frac {\partial^{2m+1}}{\partial x^{2m+1}} \int_{\R} W^\hbar (x,\xi' ,t  ) d\xi' \right ] \, 
\left [ \frac {\partial^{2m+1}}{\partial \xi^{2m+1}} W^\hbar (x,\xi ,t ) \right ] 
\ee
from which (\ref {Eq43}) follows. 

Similarly, the case of time-independent Gross-Pitaevskii equation can be treated. \ That is, if the potential $V(x)$ does not depend on $t$ and if 
$\psi (x,t;\hbar)=e^{-iEt/\hbar}\varphi (x;\hbar)$ is a solution to 
the time-independent Gross-Pitaevskii equation $E\varphi = H \varphi + \nu |\varphi|^2 \varphi$ then its Wigner transform $W^\hbar = \W^\hbar (\varphi )$ is a function 
independent of $t$ which satisfies the equation
\bee
EW^\hbar =-  \frac {\hbar^2}{4}  \frac {\partial^2 W^\hbar}{\partial x^2}  + 
\sum_{m =0}^\infty \frac {(-1)^m \hbar^{2m}}
{2^{2m}(2m) !}  \frac {\partial^{2m}W^\hbar }{\partial \xi^{2m}} \frac {d^{2m} }{d 
x^{2m} } \left [  V(x) + \nu    \int_{\R} W^\hbar (x,\xi' ) d\xi' \right ] \label {Eq44}
\eee
under the constrain 
\bee
 0 = -2 \xi \frac {\partial W^\hbar}{\partial x} +
\sum_{m=0}^{+\infty} \frac {(-1)^m \hbar^{2m} }{ 2^{2m}(2m+1)!} \frac {\partial^{2m+1} W^\hbar}{\partial \xi^{2m+1}} \frac {d^{2m+1} }{d 
x^{2m+1} }  \left [ V(x)+ \nu   \int_{\R} W^\hbar (x,\xi' ) d\xi' \right ]  \label {Eq45}
\eee
Indeed, the constrain (\ref {Eq45}) immediately follows from (\ref {Eq43}) and from the fact that $W^\hbar$ must be independent of $t$. \ In order to 
prove (\ref {Eq44}) one must remark that (\ref {Eq32}) becomes
\be
4\pi E W^\hbar = - \hbar^2 F_3 + F_4  + F_6
\ee
where $F_3$ and $F_4$ are defined by (\ref {Eq34}) and (\ref {Eq36}), and where 
\be
F_6 &=&  \nu  \int_{\R} e^{-iy \xi} Y_+ (x,y;\hbar ) \psi_+ \overline {\psi_- } dy \\ 
&& 
= 2\pi \nu \sum_{m=0}^{+\infty} \frac {(-1)^m \hbar^{2m-1} }{(2m)!2^{2m-1}} \left [ \frac {d^{2m}}{d x^{2m}} 
\int_{\R} W^\hbar (x,\xi' ) d\xi' \right ] \, 
\left [ \frac {\partial^{2m}}{\partial \xi^{2m}} W^\hbar (x,\xi  ) \right ] 
\ee
from which (\ref {Eq44}) follows. 

\section {Harmonic oscillator via the Wigner transform} \label {Sez4}

Here we apply the Wigner transform in case of real-valued potentials $V$ given by polynomials of second degree with respect to $x$; indeed, if $V$ is a 
polynomial with degree $r>2$ then (\ref {Eq21}) becomes a PDE quite hard to solve.

In Section \ref {Sez4_1} we consider, at first, the eigenvalue problem for the harmonic oscillator in the Wigner representation, following the results given by  
\S 3.6.1 \cite {Schleich}. \ In Section \ref {Sez4_2} we consider then the solution to the time-dependent Schr\"odinger equation in the Wigner representation 
when the potential has the form $V(x,t)=\gamma x^2 + Q(t) x $ where $\gamma \in \R$ is a fixed constant and $Q(t)$ is any function depending on time. \ In fact, the 
case where $\gamma = \gamma (t)$ depends on time may be similarly treated but we don't dwell here on such a problem. 

\subsection {Time-independent Schr\"odinger equation in the Wigner representation} \label {Sez4_1}

If we look for a stationary solution $\psi (x,t)= e^{-iEt/\hbar} \varphi (x)$, for $E\in \R$, then $W^\hbar$ is actually $t$-independent. \ In the harmonic 
oscillator model the potential $V(x) = \omega^2 x^2 $ is a polynomial independent of $t$ of second degree with respect to $x$; from 
Theorem \ref {Teo_4} then $W^\hbar \in L^2_\R (\R \times \R , dx d\xi )$ is the solution to the eigenvalue problem
\bee
E W^\hbar = - \frac {\hbar^2}{4} \frac {\partial W^\hbar}{\partial x^2} + \left [ \xi^2 + V(x) \right ] W^\hbar  - \frac {\hbar^2}{8} 
\frac {d^2 V}{d x^2} \frac {\partial^2 W^\hbar}{\partial \xi^2}
 \label {Eq46}
\eee
under the constrain
\bee
0 = - 2 \xi \frac {\partial W^\hbar}{\partial x} +  \frac {\partial V}{\partial x} \frac {\partial W^\hbar }{\partial \xi}  \, . 
\label {Eq47}
\eee 
Equation (\ref {Eq47}) implies that the solution $W^\hbar$ has the form
\be
W^\hbar (x,\xi )= f(y) \, , \ y = \xi^2 + V(x)\, , 
\ee
where $f$ is a real-valued function. \ From equation  (\ref {Eq46}) it turns out that $f$ must satisfy to the equation (where $f'$ and 
$f''$ respectively denote the first and second derivatives of $f(y)$ with respect to its argument $y$):
\be
Ef &=& \left [- \frac {\hbar^2}{4} \left ( \frac {dV}{dx} \right )^2 - \frac {\hbar^2}{2} \xi^2 \frac {d^2V}{dx^2} \right ] f''  
- \frac {\hbar^2}{2} \frac {d^2V}{dx^2} f' +  y f \\ 
&=& - \omega^2 \hbar^2 y f'' + y f -  \omega^2 \hbar^2 f' 
\ee
that is
\bee
\tau \frac {d^2v}{d\tau^2} + (b-t) \frac {dv}{d\tau} -a v =0 \label {Eq48}
\eee
where
\be
b=1 \ \mbox { and } \ a= \frac 12  - \frac 12  \frac {\mathcal E}{\sqrt {\lambda}}
\ee
and 
\be
f(y)= e^{-\tau/2} v(\tau )\, , \ \tau =2 \sqrt {\lambda} y\, , 
\lambda = \omega^{-2} \hbar^{-2} \ \mbox { and } \ {\mathcal E} = E \omega^{-2} \hbar^{-2}\, . 
\ee

The general solution to (\ref {Eq48}) is a linear combination of the two Kummer's functions $M$ and $U$ \cite {AS}:
\be
v(\tau ) = C_1 M(a,b,\tau ) + C_2 U(a,b,\tau )\, . 
\ee
Recalling also that
\be
U(a,b,\tau ) \sim \frac {\Gamma (b-1)}{\Gamma (a)} \tau^{1-b} \ \mbox { as } \ \tau \to 0^+ 
\ee
then $C_2=0$; furthermore, recalling that 
\be
M(a,b,\tau ) \sim \frac {\Gamma (b)}{\Gamma (a)} e^{\tau} \tau^{a-b} + e^{\pm i\pi a} \frac {\tau^{-a}}{\Gamma (b-a)} \ \mbox { as } \ \tau \to + \infty
\ee
then $f(y)$ belongs to $L^2 (\R^+ )$ if $-a= n \in \N$, that is  
\be
-\frac 12 + \frac 12  \frac {\mathcal E}{\sqrt {\lambda}} \in \N \ \mbox { that is } \ {\mathcal E}_n = \frac {2n+1}{\hbar \omega }\, , \ n=0,1, \ldots . 
\ee
In particular, for ${\mathcal E}={\mathcal E}_n $ then 
\be
f(y) = C_1 e^{-\tau /2} M(-n,1,\tau ) = C_1 e^{-y/\hbar \omega} n! L_n (2y/\hbar \omega)
\ee
where $L_n$ is the $n$-th Laguerre polynomial. \ Hence, 
\bee
E =E_n = \omega^2 \hbar^2 {\mathcal E}_n =  (2n+1) \omega \hbar \label {Eq49}
\eee
and 
\be
W^\hbar (x,\xi )=  C e^{- \frac {\xi^2 + \omega^2 x^2}{\hbar \omega}} L_n \left ( \frac {2}{\hbar \omega} (\xi^2 + \omega^2 x^2 ) \right )
\ee
where $C$ is a normalization constant. \ That is we have proved that the time-independent Schr\"odinger equation has a real-valued stationary 
solutions $W^\hbar \in L^2_{\R} (\R \times \R , dx d\xi)$ when the energy $E$ is given by (\ref {Eq49}). \ However, we should also check that 
$W^\hbar$ is the Wigner transform function of a pure state. \ In fact, $W^\hbar \in \D$ because it is, up to a multiplication factor, the Wigner 
transform of the function $\varphi_n \left ( \sqrt {\frac {\omega}{\h}} x \right ) $ (see Example \ref {Es4}). 

\subsection {Time-dependent Schr\"odinger equation in the Wigner representation} \label {Sez4_2}

Assume that $V(x,t) = \gamma x^2 + Q(t) x$; then, from Theorem \ref {Teo_3} it follows that $W^\hbar \in L^2_\R (\R \times \R , dx d\xi )$ is the solution 
to the Cauchy problem
\bee
\left \{ 
\begin {array}{l}
\frac  {\partial W^\hbar}{\partial t} = - 2 \xi \frac {\partial W^\hbar}{\partial x} + \left [ 2 \gamma x + Q(t) \right ]  \frac {\partial W^\hbar }{\partial \xi}  \\ 
W^\hbar (x, \xi , 0) = \left [ \W^\hbar (\psi_0 ) \right ] (x, \xi ) =W^\hbar_0 (x,\xi )
\end {array}
\right. \, . \label {Eq50}
\eee

\begin {theorem} \label {Teo_5}
 Let $\gamma >0$ and let 
 \be
 X(x,\xi ,t)&:=&  a_1 (t) x + a_2 (t)  \xi + a_3 (t)  \\
\Xi (x,\xi ,t) &:=& b_1 (t) x + b_2 (t)  \xi + b_3 (t)
\ee
where
 \be
 a_1 (t)=  \cos (2\sqrt {\gamma } t ) \, , \ a_2 (t)= - \frac {1}{\sqrt {\gamma}} \sin (2\sqrt {\gamma }t)  \, , \ a_3 (t) =- \frac {1}{\sqrt {\gamma}}
\int_0^t Q(t') \sin (2\sqrt {\gamma }t')  dt' \\
b_1 (t)= \sqrt {\gamma} \sin (2\sqrt {\gamma } t ) \, , \ b_2 (t)=\cos (2\sqrt {\gamma }t) \, , \ b_3 (t)= \int_0^t Q(t') \cos (2\sqrt {\gamma }t')  dt' \, . 
\ee
Then the solution to the Cauchy problem (\ref {Eq50}) is given by
\bee
W^\hbar (x,\xi ,t) =W^\hbar_0 \left [ X(x,\xi ,t), \Xi (x,\xi ,t ) \right ] \, .  \label {Eq51}
\eee
\end {theorem}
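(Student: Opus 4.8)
The plan is to verify directly that the proposed $W^\hbar(x,\xi,t) = W^\hbar_0\big[X(x,\xi,t),\Xi(x,\xi,t)\big]$ solves the Cauchy problem \eqref{Eq50}, by showing that $\big(X(x,\xi,t),\Xi(x,\xi,t)\big)$ is the backward characteristic flow of the transport equation $\partial_t W^\hbar = -2\xi\,\partial_x W^\hbar + [2\gamma x + Q(t)]\,\partial_\xi W^\hbar$. The characteristic ODE system associated with \eqref{Eq50} reads $\dot q = 2p$, $\dot p = -2\gamma q - Q(t)$, and the standard theory of first-order linear PDEs says that the solution is constant along these curves, so $W^\hbar(x,\xi,t)$ equals $W^\hbar_0$ evaluated at the point obtained by flowing $(x,\xi)$ backward in time to $t=0$. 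Thus the content of the theorem is the explicit integration of this linear system of ODEs with time-dependent forcing.

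First I would check the initial condition: at $t=0$ one has $a_1=b_2=1$, $a_2=b_1=0$, and $a_3=b_3=0$ (the integrals vanish), so $X(x,\xi,0)=x$ and $\Xi(x,\xi,0)=\xi$, giving $W^\hbar(x,\xi,0)=W^\hbar_0(x,\xi)$ as required. Next I would substitute the ansatz into the PDE. Writing $W^\hbar_0 = W^\hbar_0(X,\Xi)$ and using the chain rule,
\[
\partial_t W^\hbar = (\partial_X W^\hbar_0)\,\partial_t X + (\partial_\Xi W^\hbar_0)\,\partial_t \Xi,
\]
\[
\partial_x W^\hbar = (\partial_X W^\hbar_0)\,a_1(t) + (\partial_\Xi W^\hbar_0)\,b_1(t), \qquad
\partial_\xi W^\hbar = (\partial_X W^\hbar_0)\,a_2(t) + (\partial_\Xi W^\hbar_0)\,b_2(t).
\]
Matching the coefficients of $\partial_X W^\hbar_0$ and $\partial_\Xi W^\hbar_0$ separately, the PDE \eqref{Eq50} holds for arbitrary initial data $W^\hbar_0$ if and only if
\[
\dot a_1 = -2 b_1 + 2\gamma a_2, \quad \dot b_1 = -2\gamma a_1\cdot(\text{from }x\text{-coeff}),
\]
more precisely the two identities
\[
\dot X = -2\Xi + [2\gamma x + Q(t)]\,\frac{\partial X}{\partial \xi}\cdot(\cdots)
\]
reduce, upon collecting the monomials $x$, $\xi$, $1$, to the linear system
\[
\dot a_1 = 2\gamma a_2, \quad \dot a_2 = -2 a_1, \quad \dot b_1 = 2\gamma b_2, \quad \dot b_2 = -2 b_1,
\]
\[
\dot a_3 = -2 b_3, \quad \dot b_3 = Q(t) + 2\gamma a_3 \cdot(\cdots),
\]
together with their values at $t=0$. (The exact bookkeeping of which combination of $a_i$ and $b_i$ appears is the one routine computation here.) I would then verify that the stated $a_1=\cos(2\sqrt\gamma t)$, $a_2 = -\gamma^{-1/2}\sin(2\sqrt\gamma t)$, $b_1 = \sqrt\gamma\sin(2\sqrt\gamma t)$, $b_2=\cos(2\sqrt\gamma t)$ indeed satisfy the homogeneous part, using $\frac{d}{dt}\cos(2\sqrt\gamma t) = -2\sqrt\gamma\sin(2\sqrt\gamma t)$ etc., and that $a_3, b_3$ given by the variation-of-constants integrals $a_3 = -\gamma^{-1/2}\int_0^t Q\sin(2\sqrt\gamma t')\,dt'$ and $b_3 = \int_0^t Q\cos(2\sqrt\gamma t')\,dt'$ solve the inhomogeneous part with the forcing $Q(t)$; differentiating the integrals by the fundamental theorem of calculus produces exactly the forcing term while the boundary terms reproduce the coupling.

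The main point requiring a little care — rather than a genuine obstacle — is the coefficient matching: one must confirm that after substituting the ansatz, the term $[2\gamma x + Q(t)]\,\partial_\xi W^\hbar$ contributes $2\gamma x$ and $Q(t)$ pieces that land correctly on the $b_i$ equations, so that the inhomogeneity $Q(t)$ is absorbed solely by $a_3$ and $b_3$ while the homogeneous trigonometric structure handles the $\gamma x^2$ part. Since \eqref{Eq50} is linear with smooth (indeed polynomial in $x$) coefficients, existence, uniqueness and the validity of the method of characteristics are guaranteed, so once the ODE identities are checked the proof is complete; I would also note in passing that the Jacobian of $(x,\xi)\mapsto(X,\Xi)$ equals $a_1 b_2 - a_2 b_1 = \cos^2(2\sqrt\gamma t) + \sin^2(2\sqrt\gamma t) = 1$, which reflects the fact that the Hamiltonian flow is volume-preserving and is consistent with the norm conservation of Lemma~\ref{Lem_2} and Lemma~\ref{Lem_3}.
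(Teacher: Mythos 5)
Your overall strategy --- substitute $W^\hbar_0[X,\Xi]$ into (\ref{Eq50}), use the chain rule, collect the coefficients of $\partial_X W^\hbar_0$ and $\partial_\Xi W^\hbar_0$ and, within each, the monomials $x$, $\xi$, $1$, check the initial condition, and invoke uniqueness for the linear transport equation --- is sound, and it is essentially the paper's own computation run in reverse: the paper makes the ansatz $W^\hbar=G[f,g]$ with $f,g$ affine in $(x,\xi)$, obtains the ODE system $\dot C_1=2\gamma C_2$, $\dot C_2=-2C_1$, $\dot C_3=Q(t)C_2$ (and the same for $C_4,C_5,C_6$), and then fits $G$ to the initial datum, whereas you verify the stated coefficients directly. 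The initial-condition check, the uniqueness remark and the observation that $a_1b_2-a_2b_1=1$ are all fine.

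However, the ODE systems you actually display for the coefficients are wrong, and the error sits exactly at the point that distinguishes the forced, time-dependent case. Correct matching decouples the $a$-triple from the $b$-triple: from the coefficient of $\partial_X W^\hbar_0$ one gets $\dot a_1=2\gamma a_2$ (coefficient of $x$), $\dot a_2=-2a_1$ (coefficient of $\xi$) and $\dot a_3=Q(t)\,a_2$ (constant term), and identically $\dot b_1=2\gamma b_2$, $\dot b_2=-2b_1$, $\dot b_3=Q(t)\,b_2$; the forcing enters only multiplied by $a_2$, resp.\ $b_2$. Your first matching attempt ($\dot a_1=-2b_1+2\gamma a_2$, $\dot b_1=-2\gamma a_1$) mixes the two triples, and your ``more precise'' inhomogeneous equations $\dot a_3=-2b_3$, $\dot b_3=Q(t)+2\gamma a_3$ are the characteristic (Hamiltonian-type) equations for the affine part of the flow $S^t$, not the equations satisfied by the stated $a_3,b_3$: those hold only when $Q$ is constant, which is precisely the content of Remark \ref{Nota9}, where (\ref{Eq54}) is valid only for $Q\equiv\mathrm{const}$ and $(X,\Xi)$ differs from $S^t$ otherwise. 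Had you checked the stated coefficients against your displayed equations, the verification would fail for genuinely time-dependent $Q$. With the corrected equations the check is immediate ($\dot a_3=Qa_2$ and $\dot b_3=Qb_2$ follow from the fundamental theorem of calculus, and the trigonometric functions solve the homogeneous part), so the defect is reparable bookkeeping rather than a wrong method --- but it is the one nontrivial piece of bookkeeping in the proof and must be fixed for the argument to stand.
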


\begin {proof} We look for solution to equation (\ref {Eq50}) of the form
\bee
W^\hbar (x,\xi ,t ) = G\left [ f(x,\xi ,t), g(x,\xi ,t) \right ] \label {Eq52}
\eee
where $G(f,g)$ is a real-valued function and where $f$ and $g$ are linear function with respect to $x$ and $\xi$; i.e.
\be
f (x,\xi ,t)=C_1 (t) x + C_2 (t) \xi + C_3 (t) \ \mbox { and } \ g (x,\xi ,t)=C_4 (t) x + C_5 (t) \xi + C_6 (t)
\ee
Thus, by substituting (\ref {Eq52}) in (\ref {Eq50}) it turns out that $C_i (t)$, $i=1,\ldots , 6$, are solutions to the following ODEs 
\be
\left \{
\begin {array}{lcl}
\dot C_1 & =& 2 \gamma C_2 \\
\dot C_2 &=& - 2 C_1 \\
\dot C_3 &=& Q(t) C_2 \\
\dot C_4 &=& 2 \gamma C_5 \\ 
\dot C_5 &=& - 2 C_4 \\
\dot C_6 &=& Q(t) C_5
\end {array}
\right. \ \Rightarrow \ 
\left \{
\begin {array}{lcl}
C_1 & =& c_1 \cos \left (2 \sqrt {\gamma} t\right ) + c_2 \sin \left (2 \sqrt {\gamma} t\right )  \\
C_2 &=& \frac {1}{\sqrt {\gamma}} \left [ - c_1  \sin \left (2 \sqrt {\gamma} t\right ) +  c_2  \cos \left (2 \sqrt {\gamma} t\right ) \right ] \\
C_3 &=& \int_0^t Q (t')  C_2(t') dt' + c_3 \\
C_4 & =& c_4 \cos \left (2 \sqrt {\gamma} t\right ) + c_5 \sin \left (2 \sqrt {\gamma} t\right )  \\
C_5 &=& \frac {1}{\sqrt {\gamma}} \left [ - c_4 \sin \left (2 \sqrt {\gamma} t\right ) +  c_5 \cos \left (2 \sqrt {\gamma} t\right ) \right ] \\
C_6 &=& \int_0^t Q (t')  C_5(t') dt' + c_6
\end {array}
\right.
 \, ,
\ee
where $c_i$, $i=1,2,\ldots , 6$, are integration constants.

In order to satisfy to the initial condition it follows that the function $G$ is such that
\be
W^\hbar_0 (x,\xi )= G (u,v) 
\ee
where 
\be 
\left \{
\begin {array}{lcl}
u&=& f(x,\xi ,0)= c_1 x + \frac {c_2}{\sqrt {\gamma}}\xi +c_3 \\
v&=& g(x,\xi ,0)=  c_4 x + \frac {c_5}{\sqrt {\gamma}}\xi +c_6 
\end {array}
\right.
\, .
\ee
Inverting such a relation, provided that $c_1c_5 \not= c_2c_4$, it follows that 
\be 
\left \{
\begin {array}{lcl}
x &=& \frac {(u-c_3)c_5-(v-c_6)c_2}{c_1 c_5- c_2 c_4} \\
\xi &=& -\frac {(u-c_3)c_4-(v-c_6)c_1}{c_1 c_5- c_2 c_4} \sqrt {\gamma}
\end {array}
\right.
\ee
and thus
\be
G(u,v)= W^\hbar_0 \left ( \frac {(u-c_3)c_5-(v-c_6)c_2}{c_1 c_5-c_2 c_4}, -\frac {(u-c_3)c_4-(v-c_6)c_1}{c_1c_5-c_2c_4} \sqrt {\gamma} \right ) \, .
 \ee
From this fact and from (\ref {Eq52}), then Theorem \ref {Teo_5} follows. \ If $c_1c_5 = c_2c_4$ then the same result can be obtained considering the limit 
$c_1 c_5 \to c_2 c_4$. 
\end {proof}

\begin {remark} \label {Nota8}
The case where $\gamma <0$ holds true by simply recalling that $\cos (i\theta )= \cosh (\theta)$ and $\sin (i \theta ) = i \sinh (\theta )$. \ The case $\gamma =0$ 
holds true by simply taking the limit $\gamma \to 0$ in (\ref {Eq51}). 
\end {remark}

\begin {remark} \label {Nota9} 
The Hamiltonian flux $(q(t),p (t)) = S^t (x,\xi )$ associated to the Hamiltonian $h (p,q,t)= p^2 + V(q,t)$, where $V(q,t) = \gamma q^2 +  Q(t) q $, is the solution to the Hamiltonian system 
(where $2m=1$)
\bee
\left \{
\begin {array}{lcl}
\dot q &=& - 2 p = -\frac {\partial h}{\partial p} \\ 
\dot p &=& 2\gamma q +  Q(t) = \frac {\partial V}{\partial q} = \frac {\partial h}{\partial q} 
\end {array}
\right.
\ \mbox { and } \ 
\left \{
\begin {array}{lcl}
q(0) &=& x \\
p (0) &=& \xi
\end {array}
\right. \label {Eq53}
\eee
and it is given  
\be
q(t) 
&=& \cos (2\sqrt {\gamma } t) x - \frac {1}{\sqrt {\gamma}} \sin (2\sqrt {\gamma } t) \xi 
- \frac {1}{\sqrt {\gamma}} \int_0^t  Q (t') \sin \left [2\sqrt {\gamma } (t-t')\right ] dt' \\
&=& 
X(x,\xi ,t) + \left [ a_2 (t) b_3 (t) - b_2 (t) a_3 (t) - a_3 (t) \right ] \\
p (t) 
&=& \sqrt {\gamma }  \sin (2\sqrt {\gamma } t) x + \cos (2\sqrt {\gamma } t) \xi  +  \int_0^t  Q(t') \cos \left [ 2\sqrt {\gamma } (t-t') \right ] dt' \\
&=& 
\Xi (x,\xi ,t) + \left [ a_1 (t) b_3 (t) - b_1 (t) a_3 (t) - b_3 (t) \right ] \, . 
\ee
When the potential $V$ is {\bf independent} of $t$, i.e. $Q (t) \equiv constant$, then we have that 
\bee
W^\hbar (x,\xi ,t) = \left [ W^\hbar_0 \circ S^t \right ] (x,\xi )\, , \label {Eq54}
\eee
since $x(t)= X(x,\xi ,t)$ and $p(t)=\Xi (x,\xi ,t)$ in such a case, and thus (\ref {Eq54}) agrees with Theorem \ref {Teo_5}. \ We must remark that (\ref {Eq54}) 
does not hold true in general when the potential $V$ actually {\bf depends} on $t$, i.e. when $Q(t)$ is not a constant function.
\end {remark}

\section {Dynamics of a Gaussian wave-function} \label {Sez5}

Now, we apply Theorem \ref {Teo_5} to the study of the solution $\psi (x,t)$ to the Schr\"odinger equation
\bee
\left \{
\begin {array}{ll}
i \hbar \frac {\partial \psi}{\partial t} = - \hbar^2 \frac {\partial^2 \psi}{\partial x^2} + \left [ \gamma x^2 + Q(t) x \right ] \psi \\
\psi (x,0)=\psi_0 (x) 
\end {array}
\right. 
\label {Eq55}
\eee
where the initial wavefunction has a Gaussian shape:
\bee
\psi_0 (x) = \frac {1}{\sqrt [4]{\pi \hbar}} e^{-(x-a)^2/2\hbar} e^{i p_0 x /\hbar } \label {Eq56}
\eee
such that $\| \psi_0 \|_{L^2 (\R , dx)} =1$ and 
\bee
<x>^0 = \langle \psi_0 , x \psi_0 \rangle = a \ \mbox { and } \ < p >^0 = -i \hbar \langle \psi_0 , \nabla \psi_0 \rangle = p_0 \, . \label {Eq57}
\eee
Its Wigner transform is (Example \ref {Es3})
\bee
W^\hbar_0 (x,\xi ) = \left [ \W^\hbar (\psi_0 ) \right ] (x,\xi ) = \frac {1}{\pi \hbar} e^{-(x-a)^2/\hbar} e^{-(\xi - p_0)^2/\hbar} \, . \label {Eq58}
\eee

\begin {theorem}\label {Teo_6}
Let $\psi_0$ given by (\ref {Eq56}); let $a_j (t)$ and $b_j (t)$, $j=1,2,3$, defined in Theorem \ref {Teo_5}; let   
\be
A(t) &:=& a_2^2 (t) + b_2^2(t) \\
B(x,t) &:=& 2 a_2 (t) \left [ a_1(t) x +a_3 (t)-a \right ]  +   2 b_2 (t) \left [ b_1(t) x +b_3 (t)-p_0 \right ] \\
v(t)&=& -b_2(t) \left [ a_3(t)-a \right ] +a_2 (t) \left [ b_3 (t)- p_0\right ] 
\ee
then 
\bee
|\psi (x,t)|^2 = \frac {1}{ \sqrt {\pi \hbar A(t)} } e^{-\frac {(x-v(t))^2}{ \hbar A(t)}} \label {Eq59}
\eee
and 
\bee
\psi (x,t) =  \frac {1}{\sqrt [4]{\pi \hbar A(t)
}} e^{i \left [ \theta^\star  - \frac {B(x/2,t)x}{2\hbar A(t)} \right ]}
e^{- \frac {\left (x -v(t) \right )^2}{2\hbar A(t)} }\label {Eq60}
\eee
for some phase $\theta^\star$ independent of $x$.
\end {theorem}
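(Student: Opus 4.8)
The plan is to reduce everything to Gaussian integrals in $\xi$, by feeding the explicit Wigner transform of $\psi_0$ into Theorem \ref{Teo_5} and then inverting via formula (\ref{Eq9}). First I would substitute (\ref{Eq58}) into (\ref{Eq51}), obtaining
\[
W^\hbar(x,\xi,t)=\frac{1}{\pi\hbar}\exp\left\{-\frac{1}{\hbar}\left[\bigl(a_1 x+a_2\xi+a_3-a\bigr)^2+\bigl(b_1 x+b_2\xi+b_3-p_0\bigr)^2\right]\right\},
\]
a Gaussian in $(x,\xi)$. To obtain $|\psi(x,t)|^2$ I would apply Lemma \ref{Lem_1}, i.e. integrate over $\xi$. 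Writing $\alpha:=a_1 x+a_3-a$ and $\beta:=b_1 x+b_3-p_0$, the $\xi$-integrand has exponent $-\frac{1}{\hbar}\bigl[(a_2^2+b_2^2)\xi^2+2(a_2\alpha+b_2\beta)\xi+\alpha^2+\beta^2\bigr]$; completing the square — legitimate because $A(t)=a_2^2+b_2^2>0$ for every $t$ — and invoking the elementary identity $(\alpha^2+\beta^2)(a_2^2+b_2^2)-(a_2\alpha+b_2\beta)^2=(a_2\beta-b_2\alpha)^2$ gives $|\psi(x,t)|^2=(\pi\hbar A(t))^{-1/2}\exp\{-(a_2\beta-b_2\alpha)^2/(\hbar A(t))\}$. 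The crucial point is that the two linear forms collapse: from the explicit formulas of Theorem \ref{Teo_5} one reads off the Wronskian-type identity $a_1 b_2-a_2 b_1\equiv 1$ (the coefficient matrix of the ODE system in that proof is traceless and the data are the identity at $t=0$), whence $a_2\beta-b_2\alpha=-(x-v(t))$ with $v(t)$ precisely as in the statement; this yields (\ref{Eq59}).

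Next I would recover $\psi$ itself from the inversion formula (\ref{Eq9}), applied to $\varphi=\psi(\cdot,t)$ with the choice $x^\star=0$, which is admissible since $|\psi(0,t)|^2>0$ by (\ref{Eq59}). This requires the Fourier-type integral $\int_\R W^\hbar(x/2,\xi,t)e^{ix\xi/\hbar}\,d\xi$, again Gaussian in $\xi$ but now carrying the linear phase $ix\xi/\hbar$. Completing the square — the shift is now complex, but the contour translation to the real axis is justified by $A(t)>0$ — produces a factor that is Gaussian in $x$ together with the phase $\exp\{-iB(x/2,t)\,x/(2\hbar A(t))\}$, $B$ being exactly the quantity in the statement. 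Multiplying by the prefactor $1/\overline{\psi(0,t)}=e^{i\theta^\star}/|\psi(0,t)|$ and inserting $|\psi(0,t)|=(\pi\hbar A(t))^{-1/4}e^{-v(t)^2/(2\hbar A(t))}$ from (\ref{Eq59}), I would finally simplify the real part of the exponent: using once more the two-square identity together with $a_1 b_2-a_2 b_1=1$, the various quadratic contributions in $x$ and in $v(t)$ recombine into the single square $-(x-v(t))^2/(2\hbar A(t))$, which is exactly (\ref{Eq60}).

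The routine but slightly delicate step is this last recombination: one must check that the term $+v(t)^2/(2\hbar A(t))$ coming from the normalization factor, the term $(x/2-v(t))^2/A(t)$ produced by the Lagrange identity evaluated at the argument $x/2$, and the contribution $x^2/(4A(t))$ coming from $\Re\bigl[(B(x/2,t)-ix)^2\bigr]$ do collapse into a perfect square. The two structural facts that make this happen — and which I would isolate as preliminary remarks before the computation — are the determinant identity $a_1 b_2-a_2 b_1=1$ and the identity $(a_2\beta-b_2\alpha)^2=(\alpha^2+\beta^2)(a_2^2+b_2^2)-(a_2\alpha+b_2\beta)^2$. Finally, the cases $\gamma\le 0$ follow exactly as in Remark \ref{Nota8}, via $\cos(i\theta)=\cosh\theta$, $\sin(i\theta)=i\sinh\theta$ and the limit $\gamma\to 0$; since $A(t)>0$ holds in all of these cases, none of the Gaussian manipulations above is affected.
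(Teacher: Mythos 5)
Your proposal is correct and takes essentially the same route as the paper: substitute (\ref{Eq58}) into the solution formula (\ref{Eq51}) of Theorem \ref{Teo_5}, integrate over $\xi$ via Lemma \ref{Lem_1} (using $A(t)>0$, the Lagrange two--square identity and $a_1b_2-a_2b_1=1$) to obtain (\ref{Eq59}), and then recover $\psi$ up to a $t$-dependent phase from the inversion formula (\ref{Eq9}) with $x^\star=0$, exactly as in the paper's proof. The only differences are presentational: you spell out the two algebraic identities and the complex contour shift that the paper subsumes under ``straightforward calculation''.
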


\begin {proof}
From Theorem \ref {Teo_5} it follows that
\bee
W^\hbar (x,\xi , t) &=& \frac {1}{\pi \hbar} e^{-[X(x,\xi ,t)-a]^2/\hbar} e^{-[\Xi (x, \xi ,t) - p_0]^2/\hbar} \nonumber \\
&=& \frac {1}{\pi \hbar} e^{-\left [ A(t) \xi^2 + B(x,t) \xi + C(x,t)\right ] /\hbar } \label {Eq61}
\eee
where $A(t)$ and $B(x,t)$ has been previously defined and where 
\be
C(x,t) :=   \left [ a_1(t) x +a_3 (t)-a \right ]^2 + \left [ b_1(t) x +b_3 (t)-p_0 \right ]^2
\ee
A straightforward calculation gives that
\bee
|\psi (x,t)|^2 &=& \int_{\R} W^\hbar (x,\xi ,t ) d\xi = \frac {1}{ \sqrt {\pi \hbar A(t)} } e^{-\frac {4C(x,t) A(t)-B^2(x,t)}{4A(t) \hbar}} \nonumber 
\\
&=& \frac {1}{ \sqrt {\pi \hbar A(t)} } e^{-\frac {d_1(t) x^2 + d_2 (t) x + d_3 (t)}{A(t) \hbar}}\, , \label {Eq62}
\eee
where (let us omit the dependence on the variable $t$)
\be
d_1 &:=& (a_2 b_1 - a_1 b_2)^2 =1\\ 
d_2 &:=& 2 (a_2 b_1 - a_1 b_2)(-b_2a_3 + b_2 a+b_3 a_2 - a_2 p_0)=-2 v(t)\\
d_3 &:=& (-b_2a_3+b_2a+b_3a_2 -a_2 p_0)^2 = v^2(t)
\ee
since $a_2 b_1 - a_1 b_2=-1$. \ Hence, (\ref {Eq59}) follows.

In order to prove (\ref {Eq60}) we apply Lemma 5 with $x^\star =0$, then it follows that 
\be
\psi (2x,t) &=& {e^{i \theta^\star }} \sqrt [4] {\pi \hbar A(t)} e^{d_3 (t)/2 A(t) \hbar} \int_\R W^\hbar \left (  x , \xi ,t 
\right ) e^{i2x\xi /\hbar } d\xi \\
&=& \frac {1}{\sqrt [4]{\pi \hbar A(t)
}} e^{i \left [ \theta^\star  - \frac {B(x,t)x}{\hbar A(t)} \right ]}
e^{- \frac {1}{2\hbar A(t)} \left [ 2x -v(t) \right ]^2}
\ee
for some phase factor $\theta^\star$ independent of $x$. \ Hence (\ref {Eq60}) follows.
\end {proof}

\begin {remark} \label {Nota10}
A straightforward calculation gives that
\be
<x>^t = \langle \psi (x,t), x \psi (x,t) \rangle = v(t) \, . 
\ee
We remark that $v(t)$ coincides with the function $q(t)$ associated to the classical flux: $(q(t),p(t)) = S^t (a,p_0)$ discussed in Remark \ref {Nota9}; this fact agrees with the Ehrenfest Theorem for 
quadratic Hamiltonians \cite {BLT}.
\end {remark}

Now, we are going to apply Theorem \ref {Teo_6} to different cases in order to get an explicit expression of the probability density $|\psi (x,t )|^2$, and of the 
wavefunction $\psi (x,t)$ too in the simplest cases. \ The most simple cases (from the free problem to the harmonic oscillator) have been already known 
(see, e.g. \cite {Teta}). \ Eventually, we consider the case of the forced harmonic (when $\gamma >0$)/inverted (when $\gamma <0$) oscillator where $V(x,t)= 
\gamma x^2 + \left [ \lambda + b \cos (\Omega t)\right ] x$ for some $\gamma ,\, \lambda \, , b\, , \Omega \in \R$. 

\subsection {Free and Linear Stark potential} \label {Sez5_1}
In such a case $V(x)= \lambda x$ for some $\lambda \in \R$ and thus equation (\ref {Eq50}) takes the form 
\be
\frac {\partial W^\hbar}{\partial t} = - 2 \xi \frac {\partial W^\hbar}{\partial x} + \lambda \frac {\partial W^\hbar}{\partial \xi}
\ee
From (\ref {Eq51}) the real-valued general solution to this equation has the form (where $Q(t) \equiv \lambda$ and where we take the limit $\gamma \to 0$)
\be
W^\hbar (x,\xi , t) = W^\hbar_0 
\left ( x-\lambda t^2 - 2\xi t ,  \xi +\lambda t \right ) = \frac {1}{\pi \hbar} e^{-(x-\lambda t^2 - 2 \xi t - a)^2/\hbar} e^{-(\xi + \lambda t -p_0)^2/\hbar}\, . 
\ee
Since
\be
A=4t^2+1 \, , \ B=-4t(x-a-\lambda t^2)+2\lambda t -2p_0 \ \mbox { and } \ v=a+2p_0 t- \lambda t^2   
\ee
and then 
\be
|\psi (x,t )|^2 =  \frac {1}{\sqrt {\pi \hbar} \sqrt {4t^2+1} }e^{-\frac {(x+\lambda t^2 -a-2p_0 t)^2}{\hbar (4t^2+1)}}
\ee
and
\be
\psi (x,t ) =  \frac {1}{\sqrt[4] {\pi \hbar} \sqrt[4] {4t^2+1} } e^{i \left [ \theta^\star + \frac {(x-2 \lambda t^2- 2a)xt+p_0x- \lambda t x}{\hbar (4t^2+1)} \right ] }
e^{-\frac {(x+\lambda t^2 -a-2p_0 t)^2}{2\hbar (4t^2+1)}}
\ee
for some $\theta^\star$ depending on $t$. \ If $\lambda =0$ then the solution obtained agrees with well known results (see Example \ref {Es5}).

\subsection {Harmonic oscillator} \label {Sez5_2}
In such a case $V(x)=\omega^2 x^2$ for some $\omega \not=0 $ and thus equation (\ref {Eq50}) takes the form
\bee
\frac {\partial W^\hbar}{\partial t} = - 2 \xi \frac {\partial W^\hbar}{\partial x} + 2 \omega^2 x \frac {\partial W^\hbar}{\partial \xi} \label {Eq63}
\eee

From (\ref {Eq51}) the real-valued general solution to this equation has the form 
\be
W^\hbar (x,\xi ,t) &=&  
\frac {1}{\pi \hbar} \exp 
\left [ -(A (t)\xi^2 + B(x,t)\xi + C(x,t))/\hbar \right ]
\ee
where 
\be
A (t) &=& \frac {1}{\omega^2} \sin^2 2\omega t  + \cos^2 2\omega t \\ 
B(x,t) &=&  - \frac {2}\omega (x \cos 2 \omega t -a) \sin 2 \omega t + 2(\omega x \sin 2 \omega t - p_0  ) \cos 2 \omega t  \\ 
C (x,t)&=& + (x \cos 2 \omega t -a)^2 + (\omega x \sin 2 \omega t -p_0 )^2
\ee
and
\be
v(t)= a\cos (2\omega t) + \frac {p_0}{\omega} \sin (2\omega t)\, , 
\ee
then
\be
|\psi (x,t)|^2 = \left [ \pi \hbar \left ( \frac {1}{\omega^2} \sin^2 (2\omega t) + \cos^2 (2\omega t) \right ) \right ]^{-1/2} \exp \left [ -  
\frac {\left [ x-v(t)\right ]^2}{\hbar \left ( \frac {1}{\omega^2} \sin^2 (2\omega t) + \cos^2 (2\omega t) \right )} \right ] \, . 
\ee

\begin {remark} \label {Nota11}
The inverted oscillator model may be similarly treated by simply recalling that $\cos (i\theta )= \cosh (\theta)$ and $\sin (i \theta)= i \sinh (\theta)$.
\end {remark}

\subsection {Forced oscillator} \label {Sez5_3} Let 
\be
V(x,t)= \gamma x^2 + Q(t)x \ \mbox { where } \ Q(t) = \left [ \lambda + b \cos (\Omega t)\right ] 
\ee
for some $\gamma , \, b\, , \lambda \in \R$; it is the potential of a forced harmonic oscillator (when $\gamma >0$) or of a forced inverted oscillator 
(when $\gamma <0$). \ We focus our attention here to the inverted oscillator, where $\gamma <0$; the harmonic forced oscillator where $\gamma >0$ may be similarly treated, one has only 
to separately consider the non-resonant case, where $4\gamma \not= \Omega^2$, and the resonant case, where $4\gamma = \Omega^2$.

Equation (\ref {Eq50}) takes the form
\bee
\dot W^\hbar = - 2 \xi \frac {\partial W^\hbar}{\partial x} + \left [ 2 \gamma x + \lambda + b \cos (\Omega t) \right ] 
\frac {\partial W^\hbar}{\partial \xi} \, , \label {Eq64}
\eee
and it has solution (\ref {Eq51}). \ In particular, let $\gamma = - \omega^2$, $\omega>0$, then a straightforward calculation gives that 
 \be
 a_1 (t)= b_2 (t)= \cosh (2\omega t ) \, ,\ a_2 (t)= - \frac {1}{\omega} \sinh (2\omega t)  \, , \ b_1 (t)= \omega \sinh (2\omega t )
\ee
and 
\be
a_3 (t) 
 &=& \frac {\lambda}{2 \omega^2} \left [ 1- \cosh (2\omega t) \right ]   -b 
\frac {\Omega \sinh (2\omega t) \sin (\Omega t) + 2 \omega \left [\cosh (2\omega t) \cos (\Omega t) -1\right ] }{\omega (4\omega^2 + \Omega^2)} 
 \\
b_3 (t)&=& \frac {\lambda}{2\omega} \sinh (2\omega t) +b \frac { 2 \omega \cos (\Omega t) \sinh(2\omega t) +  \Omega \sin (\Omega t) \cosh (2\omega t) 
}{(4\omega^2 + \Omega^2)} 
\, . 
\ee
In such a case
\be
v(t) = -\frac {\lambda \left [ \cosh (2\omega t)-1\right ] -2a\omega^2 \cosh(2\omega t)-2
\omega p_0 \sinh(2\omega t)}{2\omega^2 } +  \frac {2b(\cos(\Omega t)-\cosh(2\omega t))}{(\Omega^2+ 4 \omega^2)}
\ee

\subsection {Tunnel effect for the inverted oscillator} \label {Sez5_4} The question we consider is quite simple \cite {BV,Barton,HSADV}: suppose that the quantum 
wave-function has a Gaussian shape such that $< x >^0 = a $ and $< p >^0 = p_0 $; then, assuming that $a<0$ and $p_0>0$, we would compute the probability 
\be
P(t)=\int_{-\infty}^0 |\psi (x,t)|^2 dx 
\ee
to find the quantum particle in the left-hand semi-axis $x<0$ when $t$ goes to infinity for different values of the energy. \ In fact, $P(t)$ in the framework 
of quantum mechanics represents the probability to find the particle in the interval $(-\infty ,0)$ at the instant $t$. 

Initially we consider the case on the inverted oscillator, and then the case of the forced inverted oscillator.

\subsubsection {Tunnel effect for the undriven inverted oscillator} \label {Sez5_4_1} Let us consider an initial wavefunction of the shape (\ref {Eq56}) where we 
assume, for argument's sake, again $2m=1$; then 
\be
E_q = \langle \psi_0 , H \psi_0 \rangle = \frac 12 (1-\omega^2) \hbar +p_0^2-\omega^2 a^2  
\ee
is the energy in quantum mechanics. \ Let $v(t)$ be the function introduced in Theorem \ref {Teo_6}; it describes the classical motion of the particle associated to the 
initial condition $v(0)=a$ and $\dot v (0) =\frac {p_0}{m}=2p_0$; the energy in classical mechanics is given by
\be
E_c = \frac {p^2}{2m} + V(x) = p_0^2 - \omega^2 a^2 \, .
\ee

When $p_0 = p_{crit} :=|\omega a| $ then the energy $E_c=0$ is equal to the top of the potential $V(x)=-\omega^2 x^2$. \ If 
$p_0 < p_{crit}$ then the energy $E_c<0$ is less than the top of the potential and we classically expect that a particle, 
initially at $x=a<0$ and moving forward, exhibits an inversion motion. \ Finally, If 
$p_0 > p_{crit}$ then the energy $E_c>0$ is bigger than the top of the potential and we classically expect that a particle, 
initially moving forward from $x=a<0$, passes the barrier and keeps going without reversing the motion. 

Classically we expect that $P(t)$, which initially is close to the value $1$ (for $\hbar$ small enough), will be always bigger that $\frac 12$ and goes to $1$ when $t$ goes to $+\infty$ if the energy is less that the barrier top; on the other hand, when the energy is bigger than the barrier top we expect that $P(t)$ takes the value $\frac 12$ at some instant $t$ and then it goes to $0$ when $t$ goes to $+\infty$. 

Because of the tunnel effect such a picture is quite different from a quantum mechanical point of view. \ Indeed, since $|\psi (x,t)|^2$ is given 
by (\ref {Eq59}) then a straightforward calculation gives that
\bee
P(t) = \frac 12 \left [ 1- \mbox {erf} \left ( \frac {v(t)}{\sqrt {\hbar A(t)} }\right ) \right ] \label {Eq65}
\eee
In particular, it easily follows that  
\bee
P(t)= \frac 12 - \frac 12 \mbox {erf} \left [ 
\frac {p_0\sinh (2\omega t) + a \omega \cosh (2\omega t)}{\sqrt {\hbar} \sqrt {\sinh^2 (2\omega t) + \omega^2 \cosh^2 (2\omega t)}}
\right ]
\eee
and
\bee
P_\infty:= \lim_{t\to + \infty} 
P(t)= \frac 12 - \frac 12 \mbox {erf} \left [ 
\frac {p_0+ a \omega}{\sqrt {\hbar} \sqrt {1 + \omega^2}}
\right ]\, .
\eee
One must remarks that we recover the classical picture in the limit $\hbar \to 0^+$  (see Figure \ref {Fig1}). 

We can collect these comments as follows recalling that $<x>^t = v(t)$: 

\begin {itemize}
 
\item [-] if $p_0 <p_{crit}$ then 
\be
\lim_{t\to + \infty} <x>^t = - \infty 
\ee
and $\frac 12 < P_\infty < 1$;

\item [-] if $p_0 =p_{crit}$ then 
\be
\lim_{t\to + \infty} <x>^t = 0 
\ee
and $P_\infty =\frac 12 $;

\item [-] if $p_0 > p_{crit}$ then 
\be
\lim_{t\to + \infty} <x>^t = +\infty  
\ee
and $0< P_\infty <\frac 12 $.
\end {itemize}

\begin{figure}
\begin{center}
\includegraphics[height=8cm,width=8cm]{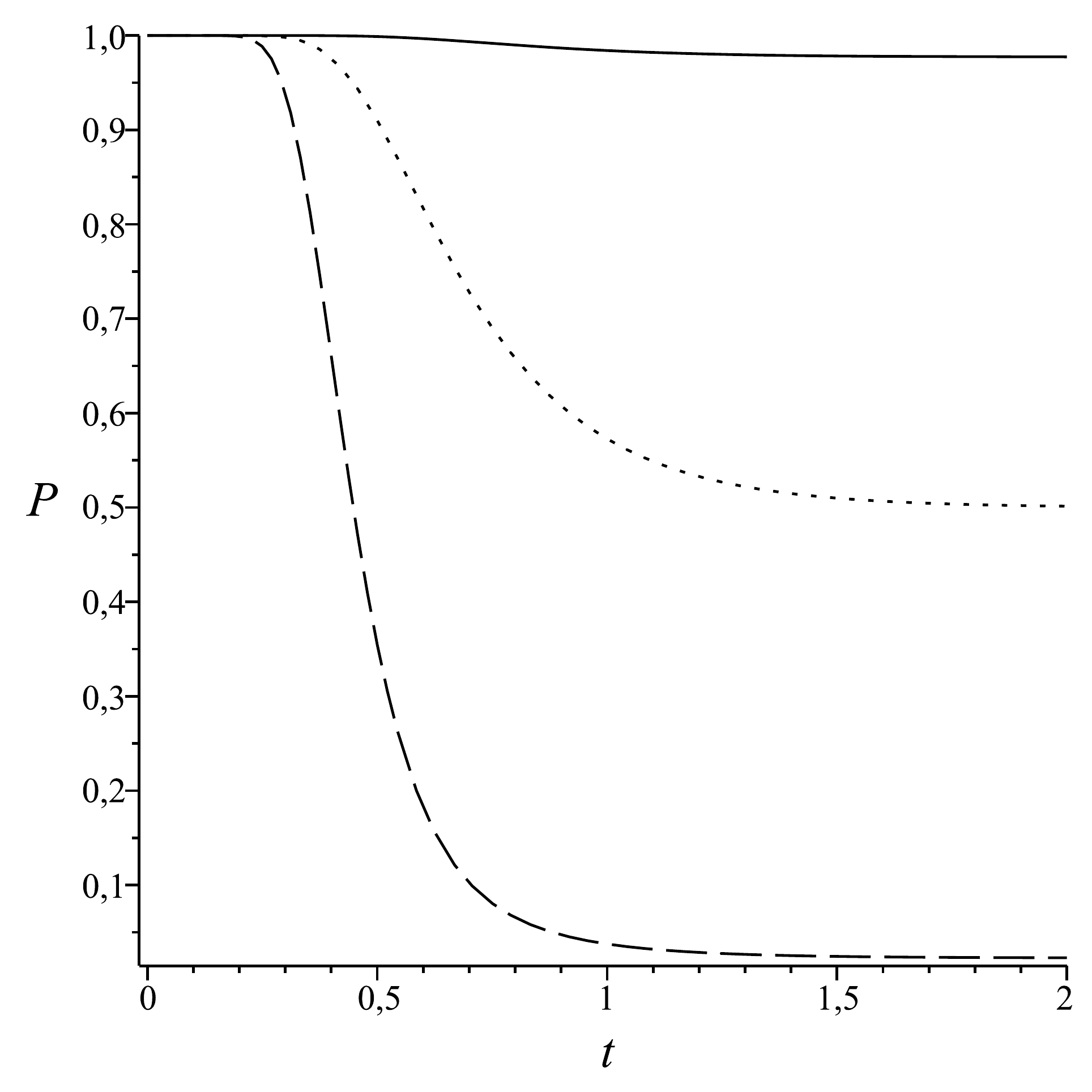}
\caption{Plot of the function $P(t)$ for $p_0<p_{crit}$ (full line), $p_0=p_{crit}$ (dot line) and $p_0>p_{crit}$ (broken line); for argument's sake we 
choose $a=-5$, $\omega=1$ and $\hbar =1$.}
\label{Fig1}
\end{center}
\end{figure}

\subsubsection {Tunnel effect for the driven inverted oscillator} \label {Sez5_4_2} If $V(x,t)= - \omega^2 x^2 + \left [\lambda + b \cos (\Omega t) \right ] x$ then one can prove that
\be
P_\infty &=& \frac 12 - \frac 12 \mbox {erf} \left [  \frac {\left ( -\lambda +2a\omega^2 + 2 p_0 \omega \right ) (\Omega^2 + 4 \omega^2)- 4 b \omega^2}{2\sqrt {\hbar} \omega \sqrt {1+\omega^2} (\Omega^2 + 4 \omega^2) }\right ] \\
&=& \frac 12 - \frac 12 \mbox {erf} \left [ -\frac {\lambda}{2\sqrt {\hbar} \omega \sqrt {1+\omega^2}} + \frac {a\omega + p_0}{\sqrt {\hbar}  \sqrt {1+\omega^2}}- \frac {2b\omega}{(\Omega^2 + 4 \omega^2)\sqrt {\hbar}  \sqrt {1+\omega^2}} \right ]
\ee
In such a case the critical value for $p_0$ takes the form
\be
p_{crit} = \frac {\lambda (\Omega^2 + 4 \omega^2 )+ 4 b \omega^2 - 2 \omega^2 a (\Omega^2 + 4 \omega^2) }{2\omega (\Omega^2 + 4 \omega^2)}\, , 
\ee
i.e.
\be
\mbox { if } \ p_0 
\left \{
\begin {array}{ll}
 < p_{crit} & \ \mbox { then } \ \frac 12 < P_\infty < 1 \\
 = p_{crit} & \ \mbox { then } \ P_\infty = \frac 12 \\
 > p_{crit} & \ \mbox { then } \ 0 < P_\infty < \frac 12
 \end {array}
 \right. 
 \, . 
 \ee

\appendix

\section {Proof of (\ref {Eq16})} \label {App0}
Let 
\be
W^\h (x,\xi ) = \frac {1}{2\pi R \xi} \sin \left [ \frac {2\xi}{\h} (R-|x|) \right ] \chi_{[-R,+R]}(x) \, . 
\ee
Then 
\be
I:= \| W^\hbar (\cdot , \cdot )\|_{L^1 (\R \times \R , dx d\xi )} = \frac {1}{2\pi R } \int_{\R} d\xi \int_{-R}^{R} dx \frac {1}{|\xi |} \left | \sin \left [ \frac {2\xi}{\h} (R-|x|) \right ] \right | \, . 
\ee
We are going to prove that this integral diverges. \ To this end let us assume, for argument's sake, that $R=1$ and $\h =2$; thus we have to compute the integral
\be
 I= \frac {1}{\pi } \int_{0}^{+\infty} d\xi \int_{0}^{R} dx \frac {1}{\xi } \left | \sin \left [ \xi (1-x) \right ] \right | \, . 
\ee
For any fixed $\xi \ge \frac 34 \pi$ and any positive integer $n=0,1,2\ldots ,$ let
\be
A_{n,\xi} : = \left \{ x \in [0,+1] \ : \ a_n:= 1- \frac {\left (n + \frac 34 \right ) \pi }{\xi } \le x \le b_n:= 1- \frac {\left (n + \frac 14 \right ) \pi }{\xi } \right \} \, , 
\ee
where $a_n \ge 0$  provided that 
\be
0 \le n \le N(\xi ):= 
\left 
\lceil \frac {\xi-\frac 34 \pi }{\pi} \right \rceil 
\ee
where $\lceil y\rceil $ denote the integer part of $y$ and where $N(\xi ) \ge 0$ for $\xi \ge \frac 34 \pi$. \ Furthermore, the measure of the interval 
$A_{n,\xi}$ is $|A_{n,\xi}|=\frac {\pi}{2\xi}$ for any $n \le N(\xi )$ and 
\be
\left | \sin \left [ \xi (1-x) \right ] \right | \ge \frac {1}{\sqrt {2}} \, ,\ \forall x \in A_{n,\xi }\, .
\ee
Let $ B_\xi := \cup_{n=0}^{N(\xi )} A_{n,\xi } $  with measure $ 
|B_\xi | =\left [ N(\xi ) +1 \right ] \frac {\pi}{2\xi} \sim \frac 12$ 
for large $\xi$. \ 
Hence, the integral
\be
I \ge \frac {1}{\sqrt {2} \pi} \int_{\frac 34 \pi}^{+\infty } d\xi \frac {1}{\xi} \int_{B_{\xi}} d x =  \frac {1}{\sqrt {2} \pi} \int_{\frac 34 \pi}^{+\infty } 
d\xi \frac {1}{\xi} |B_\xi |
\ee
diverges.

\section {Proof of (\ref {Eq19})} \label {App1} 
Let $\alpha =-\frac 12 \gamma >0$ and let
\be
\psi (x,t;\hbar ) = 
\frac {\sqrt {\alpha }}{2\hbar} \mbox {sech} \left ( {\frac {\alpha}{2\hbar^2}} x \right )e^{i \alpha^2 t/4 \hbar^3}\, ;
\ee
its Wigner's transform is 
\be 
W^\hbar &=& \frac {1}{2\pi } \int_{\R} e^{-iy\xi } \psi \left ( x + \frac 12 \hbar y  , t\right ) \overline {\psi \left ( x - \frac 12 \hbar y ,t \right ) }
dy \\ 
&=& \frac {\alpha}{8 \pi \hbar^2} \int_{\R} e^{-iy\xi } \mbox {sech} \left [ \frac {\alpha}{2\hbar^2} \left ( x + \frac 12 \hbar y  \right ) \right ] 
\mbox {sech} \left [ \frac {\alpha}{2\hbar^2} \left ( x - \frac 12 \hbar y  \right ) \right ] 
dy 
\ee
If we set $a=\frac {\alpha}{2\hbar^2}$ and $b= \frac {\alpha}{4\hbar}$ we have to deal with the integral
\be
I &=& \int_{\R} e^{-i y \xi } \mbox {sech} (ax+by) \mbox {sech} (ax-by) dy \\ 
&=& 
4\int_{0}^{+\infty} \cos ( y \xi ) \frac {1}{\mbox {cos} (\beta ) + \mbox {cosh} (\delta y)} dy \\ 
&=& 4 \pi \delta^{-1} \mbox {csc} (\beta ) \sinh (\delta^{-1} \beta \xi ) \mbox {csch} (\delta^{-1} \pi \xi ) \\
&=& \frac {2 \pi}{b} \frac {\sin (ax \xi /b )}{\sinh (2ax ) \sinh ( \pi \xi /2b)}
\ee
where $\beta = 2ax i$ and $\delta = 2b$, and where we make use of the integral transform (6) \S 1.9 page 30 \cite {B}. \ Thus 
\be 
W^\hbar  =  \frac {\alpha}{8 \pi \hbar^2} \frac {2 \pi}{b} \frac {\sin (ax \xi /b )}{\sinh (2ax ) \sinh ( \pi \xi /2b)} 
= \frac {1}{\hbar} \frac {\sin \left ( x \xi \frac {2}{\hbar} \right )}
{\sinh \left ( \frac {\alpha}{\hbar^2}x \right ) \sinh \left ( \pi \xi \frac {2\hbar}{\alpha} \right )} 
\ee
proving (\ref {Eq19}).

\begin {thebibliography}{99}

\bibitem {AS} M. Abramowitz, and I. Stegun, {\it Handbook of Mathematical Functions}, Dover, (1972).

\bibitem {AGHH} S. Albeverio,  Gesztesy F., Hoegh-Krohn R., and H. Holden, {\it Solvable Models in Quantum Mechanics}, Springer Verlag (1988).

\bibitem {AE} R.F. \'Alvarez-Estrada, {\it Non-Equilibrium Liouville and Wigner Equations: Moment Methods and Long-Time Approximations}, Entropy {\bf 16} 1426-1461 (2014).

\bibitem {BV} N.L. Balazs, and A. Voros, {\it Wigner's Function and Tunneling}, Annals of Physics {\bf 199} 123-140 (1990).

\bibitem {Barletti} L. Barletti, {\it A mathematical introduction to the Wigner formulation of quantum mechanics}, Boll. UMI {\bf 6-B} 693-716 (2003).

\bibitem {Barton} G. Barton, {\it Quantum Mechanics in the inverted oscillator potential}, Annals of Physics {\bf 166} 322-363 (1986).

\bibitem {B} H. Bateman, {\it Tables of integral transforms, vol. I}, McGraw-Hill (1954).

\bibitem {Berry} M.V. Berry, {\it Semi-Classical Mechanics in Phase Space: A Study of Wigner's Function}, Phil. Trans. R. Soc. London. Series A. {\bf 287} 237-271 (1977).

\bibitem {BLT} E. Bonet-Lutz, and C. Tronci, {\it Hamiltonian approach to Ehrenfest expectation values and Gaussian quantum states}, 
Proc. R. Soc. A {\bf 472} 20150777: 1-15 (2016).

\bibitem {Case} W.B. Case, {\it Wigner functions and Weyl transforms for pedestrians}, Am. J. Phys. {\bf 76} 937-946 (2008).

\bibitem {CG} N. Crouseilles, and G. Manfredi, {\it Asymptotic preserving schemes for the Wigner-Poisson-BGK equations in the diffusion limit}, Computer Phys. Commun. {\bf 185} 
448-458 (2014).

\bibitem {Davis} H.T. Davis, {\it Introduction to nonlinear differential and integral equations}, Dover (1962).

\bibitem {DM} P. Degond, and P.A. Markowich, {\it A quantum-transport model for semiconductors: the Wigner-Poisson problem on a bounded Brillouin zone}, RAIRO - 
Mod\'elisation math\'ematique et analyse num\'erique {\bf 24} 697-709 (1990).

\bibitem {DP} N.C. Dias, and J.N. Prata, {The Narcowich-Wigner spectrum of a pure state}, Rep. Math. Phys. {\bf 1} 43-54 (2009).

\bibitem {FH} R.P. Feynman, and A.R. Hibbs, {\it Quantum Mechanics and Path Integrals}, McGraw-Hill (1965).

\bibitem {FBFG} V.S. Filinov, M. Bonitz, A. Filinov, and V.O. Golubnychiy, {\it Wigner Function Quantum Molecular Dynamics}, Lecture Notes in Physics {\bf 739}: 
Computational Many-Particle Physics, 41-60 (2008).

\bibitem {FM} S. Filippas, and G.N. Makrakis, {\it On the evolution of the semi-classical function in higher dimension}, Eur. J. of Applied Mathematics {\bf 16} 1-30 (2005).

\bibitem {Folland} G.B. Folland, {\it Harmonic analysis in phase space}, Princeton University Press (1989).

\bibitem {HSADV} D.M. Heim, W.P. Schleich, P.M. Alsing, J.P. Dahl, and S. Varro, {\it Tunneling of an energy eigenstate through a parabolic barrier 
viewed from Wigner phase space}, Physics Letters A {\bf 377} 1822-1825 (2013).

\bibitem {Heller} E.J. Heller, {\it Wigner phase space method: Analysis for semiclassical applications}, J. Chem. Phys. {\bf 65} 1289-1298 (1976).

\bibitem {HOSW} M.Hillery, R.F.O'Connell, M.O. Scully, and E.P.Wigner, {\it Distribution functions in physics: Fundamentals}, Phys. Rep. {\bf 106} 121-167 (1984).

\bibitem {Hira} K. Hira, {\it Derivation of the harmonic oscillator propagator using the Feynman path integral and recursive relations}, Eur. J. Phys. {\bf 34} 777-785 (2013)

\bibitem {H}  R.L. Hudson,  {\it When is the Wigner quasi-probability density non-negative?}, Rep. Math. Phys. {\bf 6} 249-252 (1974).

\bibitem {Husimi} K. Husimi, {\it Miscellanea in elementary quantum mechanics, II.} Progress Theor. phys. {\bf 9} 381-402 (1953).

\bibitem {Janssen} A.J.E.M. Janssen, {\it A note on Hudson's theorem about functions with nonnegative Wigner distributions}, SIAM J. Math. Anal. {\bf 15} 170-176 (1984).

\bibitem {K} D. Kastler, {\it The $C^\star$-algebras of a free Boson field. I. Discussion of the basic facts}, Commun. Math. Phys. {\bf 1} 14-48 (1965).

\bibitem {KL} H. Konno, and P.S. Lomdahl, {\it The Wigner transform of soliton solutions for the nonlinear Schr\"odinger equation}, J. Phys. Cos. Jpn. {\bf 63}, 
3967-3973 (1994).

\bibitem {Lee} H-W Lee, {Theory and application of the quantum phase-space distribution functions}, Phys. Rep. {\bf 259} 147-211 (1995).

\bibitem {LMS1} G. Loupias, and S. Miracle-Sole, {\it 
$C^\star$-alg\`ebres des syst\`emes canoniques. I}, Commun. Math. Phys. 2, 31-48 (1966).

\bibitem {LMS2} G. Loupias, and S. Miracle-Sole, {\it $C^\star$-alg\`ebres des syst\'emes canoniques. II}, Ann. Inst. Henri Poincar\'e {\bf 6} 39-58 (1967).

\bibitem {MMF} G. Manfredi, S. Mola, and M.R. Feix, {\it Quantum systems that follow classical dynamics}, Eur. J. Phys. {\bf 14} 101-107 (1993).

\bibitem {MMP} P.A. Markowich, N.J. Mauser, and F. Poupaud, {\it Wigner series and (semi)classical limit with periodic potentials}, Journal \'Equation aux d\'eriv\'ees 
partielles {\bf 16} 1-13 (1990).

\bibitem {MY} Sh. Matsumoto, and M. Yoshimura, {\it Dynamics of barrier penetration in a thermal medium: Exact result for the inverted harmonic oscillator}, 
Phys. Rev. A {\bf 63} 012104:1-15 (2000).

\bibitem {M} O. Morandi, {\it Effective classical Liouville-like evolution equation for the quantum phase-space dynamics}, J. Phys. A: Math. Theor. {\bf 43} 
365302:1-22 (2010).

\bibitem {Mo} L. Moriconi, {\it An elementary derivation of the harmonic oscillator propagator}, Am. J. Phys. {\bf 72} 1258-1260 (2004).

\bibitem {NOC} F.J. Narcowich, and R.F. O'Connell, {\it Necessary and sufficient conditions for a phase-space function to be a Wigner distribution}, 
Phys. Rev. A {\bf 34}, 1-6 (1986).

\bibitem {PRB} M. Ploszajczak, and M.J. Rhoades-Brown, {\it Approximation Scheme for the Quantum Liouville Equation Using Phase-Space Distribution Functions}, Phys. Rev. Lett. 
{\bf 55} 147-149 (1985).

\bibitem {P} M. Pulvirenti, {\it Semiclassical expansion of Wigner functions}, J. Math. Phys. {\bf 47} 052103:1-12 (2006).

\bibitem {Schleich} W.P. Schleich, {\it Quantum optics in phase space}, Wiley-vch (2001).

\bibitem {SS} K. Singer, and W. Smith, {\it Quantum dynamics and the Wigner-Liouville equation}, Chem. Phys. Lett. {\bf 167} 298-304 (1990).

\bibitem {Tat} V.I. Tatarski, {\it The Wigner representation of quantum mechanics}, Sov. Phys. Usp. {\bf 26} 311-327 (1983).

\bibitem {T} G. Teschl, {\it Mathematical Methods in Quantum Mechanics With Applications to Schr\"odinger Operators}, Graduate Studies
in Mathematics, Volume {\bf 99} (2009).

\bibitem {Teta} A. Teta, {\it A Mathematical primer to Quantum Mechanics}, Springer (2018).

\bibitem {VPSM} M.L. Van de Put, B. Sor\'ee, and W. Magnus, {\it Efficient solution of the Wigner-Liouville equation using a spectral decomposition of the force 
field}, J. Comput. Phys. {\bf 350} 314-325 (2017).

\bibitem {W} E. Wigner, {\it On the Quantum Correction For Thermodynamic Equilibrium}, Phys. Rev. {\bf 40} 749-759 (1932).

\end {thebibliography}

\end {document}